\documentclass[conference]{IEEEtran}
\usepackage{graphicx}
\usepackage{amsthm}
\usepackage{epsfig}
\usepackage{latexsym}
\usepackage{amsfonts}
\usepackage{here}
\usepackage{rawfonts}
\usepackage[latin1]{inputenc}
\usepackage[T1]{fontenc}
\usepackage{calc}
\usepackage{capitalgreekitalic}
\usepackage{url}
\usepackage{enumerate}
\usepackage{color}
\usepackage[tbtags]{amsmath}
\usepackage{amssymb}
\usepackage{upref}
\usepackage{epic,eepic}
\usepackage{times}
\usepackage{dsfont}
\usepackage{comment}
\usepackage{cite}
\usepackage{bbm} 
\usepackage{optidef}
\usepackage{empheq}
\usepackage{amsmath}
\usepackage{enumitem}
\usepackage{booktabs}
\usepackage{caption}
\usepackage{array}     % 增强表格功能
\usepackage{booktabs}  
\usepackage{amsmath}   % 数学公式支持
\usepackage{multirow}  % 如果需要合并行
\usepackage{epstopdf}

\usepackage[ruled,vlined,linesnumbered]{algorithm2e}

\usepackage{hyperref}
\newtheorem{theorem}{\bf Theorem}

\usepackage{dsfont}

\newlength{\aligntop}
\newlength{\alignbot}
 \makeatletter
\renewenvironment{align}{%
  \vspace{\aligntop}
  \start@align\@ne\st@rredfalse\m@ne
}{%
  \math@cr \black@\totwidth@
  \egroup
  \ifingather@
    \restorealignstate@
    \egroup
    \nonumber
    \ifnum0=`{\fi\iffalse}\fi
  \else
    $$%
  \fi
  \ignorespacesafterend%
  \vspace{\alignbot}\par\noindent
} \makeatother

\IEEEoverridecommandlockouts

\usepackage{algorithmic}
\usepackage{subfigure}
\begin{document}
\bstctlcite{IEEEexample:BSTcontrol}
%\clearpage
\title{Inverse-Reinforcement Learning Enabled Digital Twin for Intent-based Drone Networks}
\vspace{-20mm}

% author names and affiliations
% use a multiple column layout for up to three different
% affiliations

% \author{
% \IEEEauthorblockN{Jiahao Wang\IEEEauthorrefmark{1}, Ruimin Yang\IEEEauthorrefmark{1}, Hanzhi Yu\IEEEauthorrefmark{2}, Huaiyu Dai\IEEEauthorrefmark{3}, and Ye Hu\IEEEauthorrefmark{1}\IEEEauthorrefmark{2}\vspace{-0.02cm}}
  
% \IEEEauthorblockA{\IEEEauthorrefmark{1}\small Department of Industrial and Systems Engineering, University of Miami, Miami, FL, USA, 33146.\\
% } 

%\IEEEauthorblockA{\IEEEauthorrefmark{2}\small Department of Electrical Engineering, Princeton University, Princeton, NJ, USA.\\
  %The Future Network of Intelligence Institute, Chinese University of Hong Kong, Shenzhen, China. Email: \protect{mingzhec@princeton.edu}.}

% \IEEEauthorblockA{\IEEEauthorrefmark{2}\small  Department of Electrical
% and Computer Engineering, University of Miami, Miami, FL, USA, 33146.\\
% } 

% \IEEEauthorblockA{\IEEEauthorrefmark{3}\small Electrical and Computer Engineering Department, North Carolina State University, Raleigh, NC, USA, 27606.\\
% Emails: jxw2293@miami.edu, rmyang@miami.edu, hanzhiyu@miami.edu, Huaiyu\_Dai@ncsu.edu, yehu@miami.edu,}

\author{
Jiahao Wang, Ruimin Yang, Hanzhi Yu, Huaiyu Dai~\IEEEmembership{Fellow,~IEEE}, and Ye Hu~\IEEEmembership{Member,~IEEE}%
\thanks{J. Wang and R. Yang are with the Department of Industrial and Systems Engineering, University of Miami, Coral Gables, FL 33146, USA, Emails: jxw2293@miami.edu, rmyang@miami.edu.}%
\thanks{H. Yu is with the Department of Electrical and Computer Engineering,
University of Miami, Coral Gables, FL 33146, USA, Email: hanzhiyu@miami.edu.}%
\thanks{H. Dai is with the Electrical and Computer Engineering Department, North Carolina State University, Raleigh, NC 27606, USA, Email: Huaiyu\_Dai@ncsu.edu.}%
\thanks{Y. Hu is with the Department of Industrial and Systems Engineering and the Department of Electrical and Computer Engineering,
University of Miami, Coral Gables, FL 33146, USA, Email: yehu@miami.edu.}%
\thanks{A preliminary version of this work was presented at the IEEE INFOCOM 2025 - IEEE Conference on Computer Communications Workshops (INFOCOM WKSHPS) \cite{Yang2025INFOCOMWKSHPS}.}
}

\vspace{-0.9cm}

%\thanks{This work was supported in part by the US National Science Foundation under grant ECCS-2203214}

% add grants
% use for special paper notices
%\IEEEspecialpapernotice{(Invited Paper)}% make the title area
\maketitle
%\vspace{-12mm}
\begin{abstract}
\boldmath
{In this paper, the problem of the trajectory design for an intent-based drone operating in resource-constrained, dynamic wireless network environments is studied. In the considered model, the drone acts as a supplementary base station that navigates among ground user clusters to provide on-demand uplink data access. Given its intended application (e.g traffic monitoring), the drone base station (DBS) prioritizes serving certain clusters (e.g. high-risk highway sections).  A digital twin (DT) system, hosted on a central server, creates a virtual representation of the physical wireless network environment to simulate and predict related changes, in which case the DBS trajectory should also be adjusted. Then, the DT system suggests adjustments to DBS trajectories without guaranteed access to the underlying DBS intent (i.e., service priorities), as this intent evolves over time and cannot be updated to the DT system in a timely manner due to intermittent connectivity between the DBS and the DT server. Such adjustment is posed as an optimization problem whose goal is to find the trajectories with which the fraction of prioritized users served by the DBS is maximized. To solve this problem under unknown DBS intent and unpredictable environment changes, an inverse reinforcement learning (IRL) based DT actuation solution is proposed. Simulation results demonstrate that the proposed solution provides near-real-time, near-optimal trajectory adjustment, with approximately 85\% less performance loss across environmental changes, compared to traditional reinforcement learning based on-board DBS control. The DT framework also enhances drone network performance by up to 2.5 times, compared to standard drone networks where a DBS operates with its erroneous and delayed environmental sensing.}

 \end{abstract}

\begin{keywords}
Intent-based drone networks, trajectory design, digital twin, inverse reinforcement learning.
\end{keywords}

\renewcommand{\thefootnote}{\fnsymbol{footnote}}

\IEEEpeerreviewmaketitle
 \vspace{-0.2cm}  
\section{Introduction}
 \vspace{-0.1cm}
{Drones are emerging as a promising solution to meet the increasing wireless communication demands in remote areas, heavily-populated urban cities, and disaster-stricken regions \cite{10032258,10588641,9345802,Luong2021TWC,Cang2023TWC,Wang2023TWC}. However, effectively deploying drones to meet real-world demands remains challenging. Specifically, how to maneuver drones to deliver timely, effective, on-demand data access is a major challenge, especially given the highly unpredictable nature of user demands and the wireless environment. Driven by the increasing demand of drone data collection in many applications (e.g., smart farming, manufacturing, etc.), the drones are expected to take highly diverse, complex intents in serving various user devices \cite{Sun2021JSAC, Zhu2023TWC, Liu2024TWC,Liu2021JSAC,Feng2022TWC,11216397}, which further exacerbates this challenge.}
% 10588641

\vspace{-0.2cm}
\subsection{Existing methods}
Existing literature \cite{8718556,Zhang2021TWC,8237204,9919620,8737778,9348212,9709537,9322414} has proposed a number of trajectory design solutions that directly operate on the drones. Particularly, prior works \cite{8718556,Zhang2021TWC,8237204} apply Gaussian based methods to estimate channel quality, user needs, and obstacle positions so as to deploy drones for optimal service performance. The work in \cite{8718556} applies local Gaussian modifiers to predict channel conditions at various drone positions, enabling rapid updates to current drone base station (DBS) trajectory plans under limited onboard computational resources. The authors in \cite{Zhang2021TWC} use a weighted expectation maximization based learning approach with Gaussian mixture guided traffic modeling to predict user distribution and downlink traffic demand, so as to optimize UAV deployment accordingly. In \cite{8237204}, a Gaussian mixture model is used to represent the prior distribution of target presence in rivers and plan DBS trajectories to detect a single and stationary target. Despite their promising results, solutions in \cite{8718556,Zhang2021TWC,8237204} are mostly under-fitted to real-world dynamics, as they mainly serve average user needs within an estimated wireless environment. By contrast, the work in \cite{9919620} studies the trajectory optimization of energy constrained DBSs to maximize the capacity of DBS-aided networks using a double Q learning solution. The work in \cite{9348212} adopts the transfer reinforcement learning (RL) method to accelerate the searching of DBS trajectories in post-disaster scenarios, with unknown user distribution and geographical features. In \cite{9709537}, the problem of trajectory optimization in wireless-powered DBS networks is addressed using a multi-agent RL approach. However, these RL based solutions \cite{9919620, 8737778, 9348212,9709537}, are mostly over-fitted to their training environment, and cannot provide timely adjustment of drone trajectory in response to unexpected environmental changes (e.g. user requests, or wireless impediments that did not appear within the training wireless environment). While the work in \cite{9322414,Hu2021JSAC} studies the usage of meta learning for accelerated trajectory adaptation in dynamic environments, it still requires the DBS to repeatedly fly around the target service area and collect information of environmental changes, which delays its response to users' needs.

Recently, there has been significant interest in using digital twin (DT) technology for decision making in unpredictable complex, dynamic environments. By synchronizing critical environmental information into the digital dimension, DT provides a systematic framework for monitoring, simulating, and predicting dynamic wireless environments. Particularly, existing literature\cite{dt-without1,dt-without2,Zhou2026TMC} has explored the usage of DT in drone networks. The authors in \cite{dt-without1} propose a DT of drone swarms, within which a high-fidelity virtual replica of the physical drone swarm is constructed based on real-time sensing of the physical-world dynamics, to guide the planning of coordinated drone operations. The authors in \cite{dt-without2} study a federated DT framework that fuses cooperative sensing and federated aggregation of local DT models to improve real-time DT faithfulness with rapidly changing system states and network topologies. In \cite{Zhou2026TMC}, a cooperative multi-scale DT framework is proposed for low-altitude delivery networks, in which macro-scale DT derives delivery associations between mobile delivery groups and parcel clusters, while micro-scale DT enables real-time path planning from local sensing data. However, all these works \cite{dt-without1,dt-without2,Zhou2026TMC} focus on observing and modeling the physical environment within DT, without interpreting DBS intents, which vary across time and among DBSs. Meanwhile, they lack the perspective of capturing network dynamics driven by DBS decisions, and mainly rely on optimization-driven paradigms to search decisions over predicted dynamic. So they are generally computationally intensive when handling the complex interplay of DBS operations and real-world wireless dynamics. Some more recent work leverages generative model-based RL, to model decision-driven environmental dynamics and automate decision making in DT \cite{Schena2024JOCS,Imanberdiyev2016ICARCV,Chen2023MASS}. In \cite{Schena2024JOCS}, reinforcement twinning is used to improve control performance and sample efficiency by combining a DT model with model-based policy learning, under the strong assumption of deterministic system dynamics and consistency between the virtual and real environments. The authors in \cite{Imanberdiyev2016ICARCV} use a decision-tree to model the environmental transition dynamics to allow real-time adjustment of UAV navigation in highly dynamic environments. The work in \cite{Chen2023MASS} proposes a DT-assisted model-based RL framework for joint task offloading and bandwidth allocation in collaborative edge computing networks, in which a state transition model is learned to capture dynamic edge-node resource availability and network conditions. This design enables more adaptive and efficient resource management in collaborative edge computing environments. Despite their promising results \cite{Schena2024JOCS,Imanberdiyev2016ICARCV,Chen2023MASS}, these works predict solely network changes, without understanding DBS intents, i.e., their service priorities, the full access to which is not guaranteed in real-world DT applications due to intermittent connectivity between the DBS and the DT server, making them impractical in real world drone networks.

\vspace{-0.2cm}
\subsection{Contributions}

The main contribution of this paper is a novel DT framework that actuates intent-based drone maneuvering decisions in dynamic wireless environments. We innovatively build a generative inverse reinforcement learning (IRL) model that enables the DT to infer DBS intents, observe network changes, and autonomously actuate the adaptation of DBS trajectories across various environmental changes. In brief, our key contributions include:
\begin{itemize}
 %\vspace{-0.02cm}
\item We consider a practical drone-aided wireless system in which a DBS navigates, under strict energy constraints and limited situational awareness, to provide supplementary uplink connectivity to distributed ground users for an intended application.
We propose to use a DT system to monitor, model, and predict network changes (e.g., wind velocity, user data requests, and energy consumption), thereby actuating decisions of drone trajectories without requiring access to drone intent. 
\item We cast the DT trajectory actuation into an optimization problem setting where the drone intent is captured as maximizing the fraction of prioritized user clusters that are served under the energy constrained and wind impacted flights. The DT system detects and predicts the environmental changes that can affect this optimization objective, thereby adapting the drone trajectory accordingly. This eliminates the need for the DBS to continuously gather environmental information and retrain its navigation decisions to fulfill its service intent.  
\item A novel generative IRL based solution is proposed to solve the DT assisted DBS trajectory actuation problem. Particularly, this solution trains a policy function through the inference of DBS intents to replicate DBS optimal trajectories in similar wireless environments. Then, the trained policy, which is taken as the digital replica of the physical DBS, can be directly used to actuate trajectory decisions to fulfill drone intents upon environmental changes. The proposed approach utilizes generative adversarial networks (GANs) to synthesize DBS trajectories under various conditions to foster high-performance trajectory actuation across various environmental changes.
\end{itemize}
Our analytical and numerical results show that, as the DT system provides timely and accurate accurate prediction of the physical wireless environment, the drone network can reduce the degradation of its service coverage across environmental changes. Our numerical results justify that the proposed method can achieve near-optimal DBS service coverage benchmarked by exhaustive search. Specifically, by integrating the GAN model with the imitation learning mechanism, the proposed solution yields near-optimal service coverage, despite the lack of accurate information on DBS intents. They also verify that the generative learning model also helps the drone network adapt to environmental changes up to 2 times faster than the PPO algorithm, with 85\% less performance loss across the changes. Overall, it is verified that the proposed framework can offload the prediction of environmental changes and the adjustment of DBS trajectories to a DT of the drone network, which infers DBS intents and actuates the trajectory adjustment with higher scalability, reliability, and communication performance. 

The rest of this paper is organized as follows. The system model and problem formulation are described in Section II. In Section III, the proposed algorithm is discussed. In Section IV, numerical simulation results are presented. Finally, conclusions are drawn in Section V.

%\vspace{-0.12cm}

\section{System Model and Problem Formulation}
We consider an infrastructure-limited geographical area where a set $\mathcal{U}$ of $U$ randomly distributed terrestrial users are requesting uplink data access that cannot be served by terrestrial base stations (BSs). To satisfy these requests, a fixed-wing DBS is dispatched. As shown in Fig. \ref{Fig. 1}, this DBS flies around the target area to serve terrestrial users, and can temporarily connect to a central server wirelessly through the distributed BSs whose coverage does not overlap with this infrastructure-limited geographical area to get maneuvering recommendations. 
The terrestrial users are divided into several groups, each of which is called a cluster representing a geographical region with a radius of $d_r$ that the DBS can and must cover. The set of these clusters is denoted as $\mathcal{C}$. Notice that, based on its service intent, the DBS can have different service priorities for these clusters. For example, a DBS monitoring traffic will prioritize clusters with the highest vehicle density. We assume the weight that this DBS assigns to serving cluster $k\in\mathcal{C}$ is $w_{k}$, with $w_{k}$ capturing that the DBS puts a higher priority on serving this cluster.
Meanwhile, within these clusters, each user $u$ independently requests $b_u$ bits of data services at time $t_u$. We use $\boldsymbol{b} = [b_1, \dots, b_U]$ and $\boldsymbol{t} = [t_1, \dots, t_U]$ to, respectively, represent the vector of these independent user requests and the occurrence times. 

The DBS travels among the clusters, receives and fulfills their data uplink requests. 
Particularly, it flies among these clusters along steady straight-and-level flights (SLFs), and hovers over each cluster in a circular flight (a circle with radius $r$), with constant altitude $H$ and ground speed $V$. 
However, due to the influence of wind with velocity $\boldsymbol{\vec{v}_w} = v_w\angle\alpha$ where $v_w$ represents the wind speed in the horizontal plane, and $\alpha$ denotes the wind direction angle measured counterclockwise from the east direction, the actual flying velocity of the DBS will be $\boldsymbol{\vec{v}} = \boldsymbol{\vec{v}_g} + \boldsymbol{\vec{v}_w}$ where $\boldsymbol{\vec{v}_g}$ represents the ground velocity vector that the DBS uses against wind impacts. As the wind velocity varies over times and locations, the DBS must consume energy to maintain its flying velocity. The DBS is assumed to be myopic without prior knowledge about the actual user requests and wind velocity at each cluster before its arrival. Moreover, it flies under energy constraints such that it must return to its original $O$ within a time period $T$ for battery recharging. The DBS's movement can be represented by a trajectory vector $\boldsymbol{\xi} = [\xi_1, \xi_2, \dots, \xi_K]^\top$ where $\xi_k \in \mathcal{C} \cup \{O\}$ represents the $k$-th cluster that the DBS serves, or the initial location $O$ to which it returns after serving the users. $K$ represents the maximum number of clusters that the DBS can serve before returning.
In the considered wireless network, the DBS, BSs, and user clusters are considered as the \emph{physical network objects (PNOs)}. The status of each PNO can be described by its internal properties (e.g., DBS location, flying velocity, energy budget, service-intent weights, BS locations, user-cluster locations, user-cluster service requests, and wind velocity) as well as by its external properties (e.g., communication performance). The central server is not guaranteed access to weights $w_{k}$, due to the wide variety and dynamic nature of DBS intents.

The central server operates a DT system that functions as a real-time virtual representation of the physical wireless network, such that it is expected to comprise the same PNOs, with the same internal and external properties. Specifically, the DT system is built based on the following three processes: 
1) Mapping the PNOs into a virtual environment by replicating them as vectors of their internal and related external properties. As shown in Fig. \ref{Fig. 1}, the DT system collects PNO properties by periodically synchronizing related data with distributed sensing units.  
2) Evolving the internal and external properties of virtual PNOs. In particular, once a change of internal or external properties at one PNO is mapped to or predicted by the DT system, the internal and external properties of other PNOs will also be updated\footnote{Due to rapid environmental changes and communication delays, the DT does not have guaranteed access to DBS intents.}.
3) Actuating the DBS trajectory adaptation upon changes in virtual PNO properties, with the goal of achieving optimal prioritized service coverage. In what follows, we will first explain how the DT system evolves PNO properties concerning communication performance, flying velocity, and energy budget. The PNO locations, wind velocity, and user requests are inferred using data-driven sensing, synchronization, and prediction methods, following our previous work \cite{Tong2025TMC}. Then, we formulate the DBS trajectory actuation as an optimization problem. 
%\vspace{-0.2cm}

\begin{figure}
  \centering
  \includegraphics[width=0.5\textwidth]{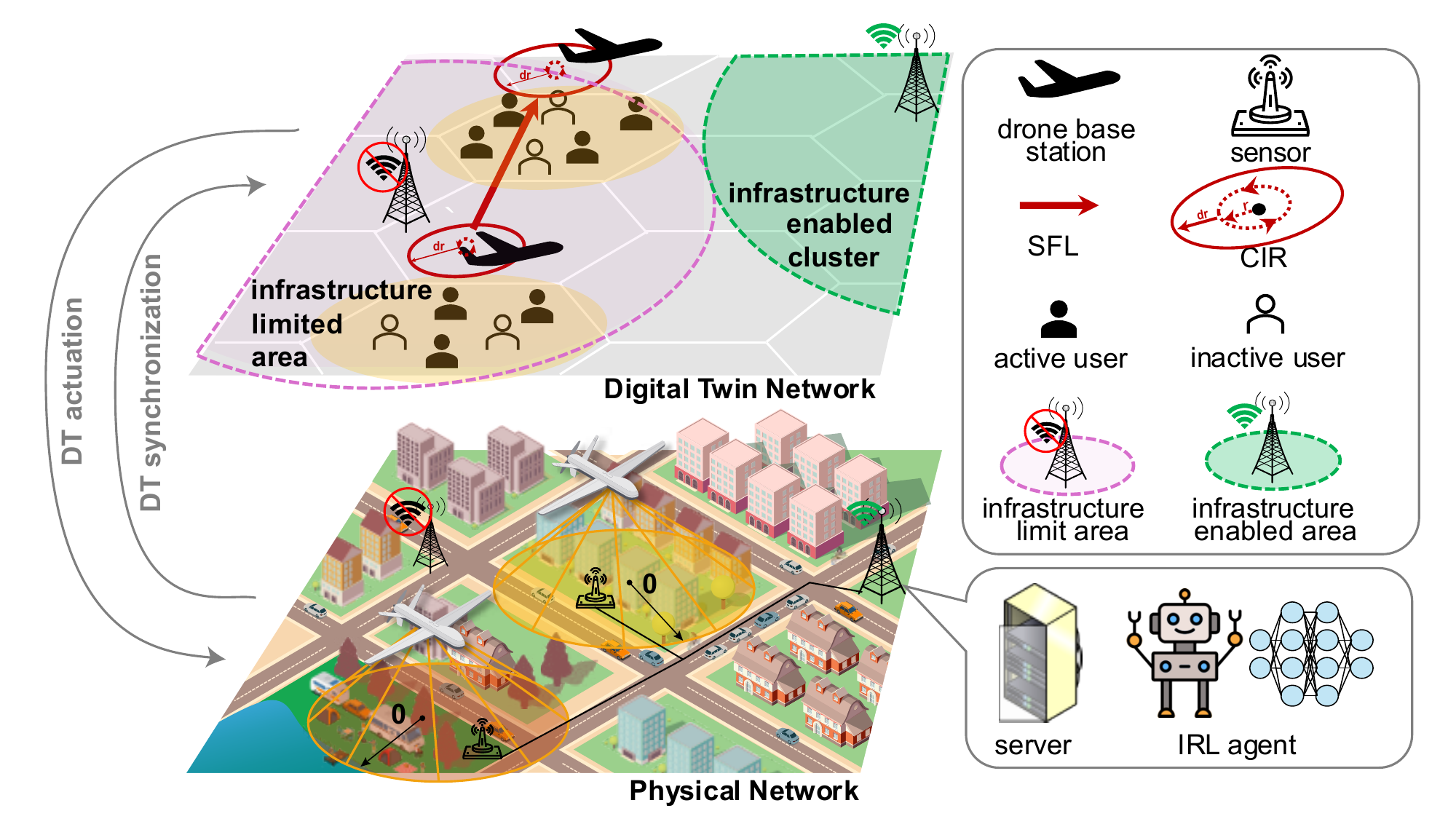}
  \caption{\footnotesize{Network topology.}}
  \label{Fig. 1}
  \centering
  \vspace{-0.5cm}
\end{figure}

\vspace{-0.2cm}
\subsection{Communication Performance Analysis}

In the considered network, the users transmit data over a set of uplink resource blocks (RBs) using the orthogonal frequency division multiple access (OFDMA) technique. 
The achievable data rate at the link between the DBS and user $u$ is given by
\begin{equation}
    c_u = \beta_{\text{LoS}}^{u} B \log_2 \left( 1 + \gamma_{\text{LoS}}^{u} \right) + \beta_{\text{NLoS}}^{u} B \log_2 \left( 1 + \gamma_{\text{NLoS}}^{u} \right),
\end{equation}
within which $\beta_{\text{LoS}}^{u} = \left( 1 + \phi \exp \left( -\frac{\varphi}{180/\pi} \theta_u \right) \right)^{-1}$ represents the probability of a line-of-sight (LoS) data access,
and $\beta_{\text{NLoS}}^{u} = 1 - \beta_{\text{LoS}}^{u}$  represents the probability of a non-line-of-sight (NLoS) data access. Here, $\phi$ and $\varphi$ denote the constant values that depend on the communication environment, while $\theta_u$ is the elevation angle between the DBS and user $u$. $\gamma^{\textrm{LoS}}_{u} =P(N_0 B 10^{\frac{h^{\textrm{LoS}}_{u}}{20}})^{-1}$ and $\gamma^{\textrm{NLoS}}_{u} = P(N_0 B 10^{\frac{h^{\textrm{NLoS}}_{u}}{20}})^{-1}$ represent the signal-to-noise ratio (SNR) at the LoS and NLoS links, respectively. Here, $P$ is the transmission power at user equipments (assumed to be equal for all users), $N_0$ is the noise power spectral density, and $B$ is the RB bandwidth (assumed to be equal for all RBs). %To avoid LoS interference to ground links, we assume that the DBS uses its own dedicated frequency band. 
Path losses $h^{\textrm{LoS}}_{u} = 20 \log \left( \frac{4\pi f_c d_u}{c} \right) + \zeta^{\textrm{LoS}}_{u}$ and $h^{\textrm{NLoS}}_{u} = 20 \log \left( \frac{4\pi f_c d_u}{c} \right) + \zeta^{\textrm{NLoS}}_{u}$ represent, respectively, the path loss in (dB) for LoS and NLoS air-to-ground communication links, with $f_c$ being the carrier frequency of the communication link between the DBS and user $u$. $d_u$ is the distance between user $u$ and the DBS\footnote{Since the DBS flies over a circle with small radius, we assume that $d_u$ remains the same value during the DBS' data collection.}, while $c$ is the speed of light. $\zeta^{\textrm{LoS}}_{u}$ and $\zeta^{\textrm{NLoS}}_{u}$ are the additional path losses at the LoS and NLoS links between the DBS and user $u$, respectively, which are assumed to follow Gaussian distributions with different parameters. The path loss values are assumed to be stable as the distance between the DBS and user $u$ only experiences slow changes when the DBS flies within the service area.

%Indeed, to satisfy all the user requests within the $k_{th}$ cluster, the DBS need to keep hovering over the cluster along a circular path with radius $r$.
The time needed by the DBS to hover over cluster $k$ for serving user requests is 
\begin{equation}\label{eq:3_single}
D^*_{{k}}=
  \begin{cases}
    \mathop {\max }\limits_{u\in \mathcal{U}^*_{{k}}} D_{u} - \frac{2d_r}{V} & \text{if $\mathop {\max }\limits_{u\in \mathcal{U}^*_{{k}}} D_{u} - \frac{2d_r}{V} \geq 0$,} \\
    0 &  \text{otherwise,} \\
  \end{cases}
\end{equation}
in which $D_{u}=\frac{b_u}{c_{u}}$ is the transmission delay of the communication link between user $u$ and the DBS. $\mathop {\max }\limits_{u\in \mathcal{U}^*_{{k}}} D_{u}$ is the time that the DBS used to serve all the users in cluster ${\xi}_{k}$. We define $\mathcal{U}^*_{{k}}=\left\{u\left| {u\in\mathcal{U}_{{k}}, T-\tau^*_{k} \le t_u\le T-\tau_{k}} \right.\right\}$ as the set of active users that can be served by the DBS in cluster ${{\xi}_{k}}$. 
Here, $\tau_{k}=T-\sum_{\kappa=0}^{k-1}\frac{d_{{\kappa},{\kappa+1} }}{V}-\sum_{\kappa=1}^{k}D^*_{{k}}$ is referred to as the \emph{available service time} of the DBS after step $k$, with $d_{\kappa,\kappa+1}$ being the distance between cluster $\xi_{\kappa}$ and cluster $\xi_{\kappa+1}$. It represents the remaining time the DBS is allowed to keep flying after serving cluster $\boldsymbol{\xi}_{k}$.
$\tau^*_{k}=T-\sum_{\kappa=0}^{k-1}\frac{d_{{\kappa},{\kappa+1} }}{V}-\sum_{\kappa=1}^{k-1}D^*_{{\kappa}}$ is referred to as the available service time of the DBS before step $k$.
$\frac{2d_r}{V}$ is the time the DBS consumes during its SLF travel within the service area. 
\vspace{-0.2cm}
\subsection{Flying Velocity of the DBS}
To serve the intended users, the DBS will maintain its flying direction under wind effect \cite{Alez2023VTOLUD}. In such a case, the DBS flying speed will be adjusted as 
\begin{equation}
V = \sqrt{V_g^2 + v_w^2 + 2V_gv_w \cos\rho}. 
\end{equation}
In other words, the DBS flying speed is the vector sum of its ground speed $\vec{v}_g$ and wind speed $\vec{v}_w$. The angle $\rho$ denotes the relative angle between the wind direction and the DBS heading, as shown in Fig. \ref{Fig. UAVWIND}.

 \begin{figure}
  \centering
  \vspace{-0.1cm}
  \includegraphics[width=0.4\textwidth]{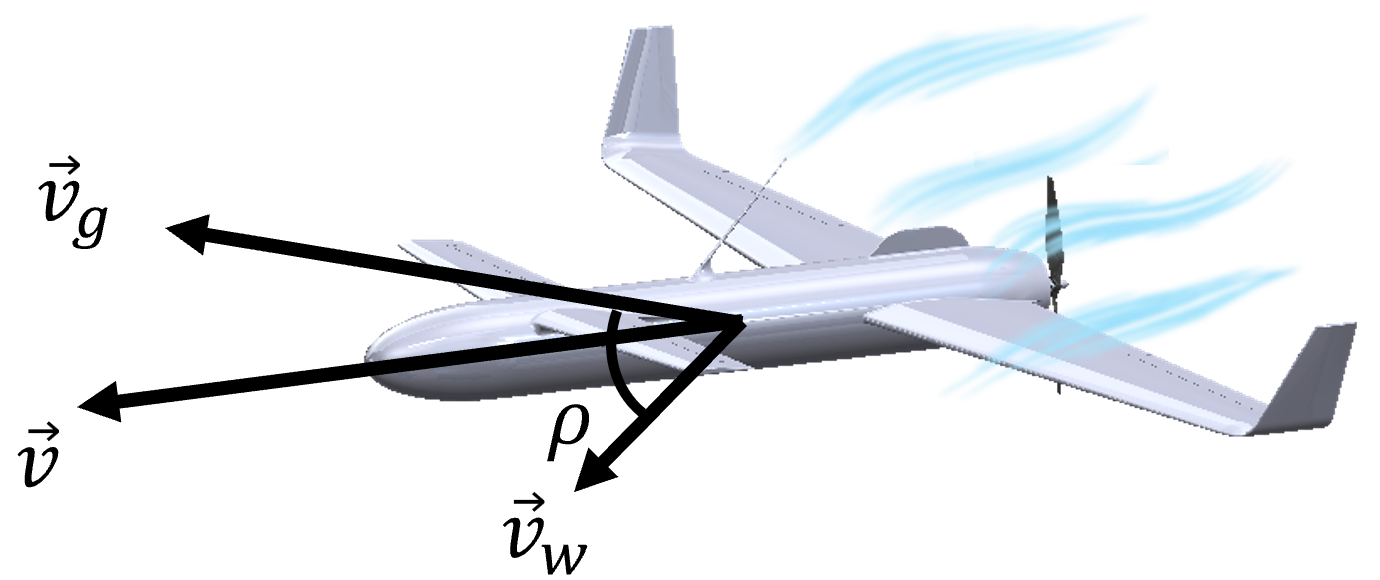}
  \caption{\footnotesize{Reference flying direction used in the mathematical model.}}
  \label{Fig. UAVWIND}
  \centering
  \vspace{-0.5cm}
\end{figure}

\vspace{-0.2cm}
\subsection{Energy Budget of the DBS}
The total energy consumption of the DBS includes two components: communication power and propulsion power. Note that in practice, the communication-related energy is typically negligible compared to the DBS's propulsion energy, e.g., a few watts versus thousands of watts \cite{Thibbotuwawa2019Energy}, therefore, it is omitted in the energy budget of the DBS. The propulsion energy consumed during SLF between cluster ${\xi}_{k}$ to cluster ${\xi}_{k+1}$ is given by
\begin{equation}
    e^{\text{SLF}}_{k,k+1} = \left( \frac{2d_r}{V} + \frac{d_{k, k+1}}{V} \right) \left( c_1 V^3 + \frac{c_2}{V} \right),
\end{equation}
in which $c_1$ and $c_2$ are constants related to the aircraft's characteristics, such as weight and aerodynamics \cite{Alez2023VTOLUD}. $\frac{2d_r}{V} + \frac{d_{k, k+1}}{V} $ is the time needed by the DBS to travel in a SLF from cluster ${\xi}_{k}$ to cluster ${\xi}_{k+1}$. The propulsion energy consumed for the circular flight at cluster ${\xi}_{k}$ is
\begin{equation}
e^{\textrm{CIR}}_{k} = D^*_{{k}} \left[ \left( c_1 + \frac{c_2}{g^2 r^2} \right) V^3 + \frac{c_2}{V} \right],
\end{equation}
in which $g$ is the gravitational acceleration.

\vspace{-0.3cm}
\subsection{Hit rate}
The service status of cluster $\xi_k$ is captured as the portion of served users, which we refer to as the hit rate at cluster $\xi_k$,  and is calculated as
\begin{equation}\label{eq:ssrate_single}
\begin{split}
\mu_{k}\left(\boldsymbol{\xi}\right)=\frac{\sum_{u\in\mathcal{U}}\mathds{1}_{\left\{u\in \mathcal{U}_{{k}}, T-\tau^*_k \le t_u\le T-\tau_{k}\right\}}}{\sum_{u\in\mathcal{U}}\mathds{1}_{\left\{0\le t_u\le T\right\}}},
\end{split}
\end{equation} 
within which $\mathds{1}_{\left\{x\right\}}=1$ when $x$ is true, otherwise, $\mathds{1}_{\left\{x\right\}}=0$. 
Here, $\sum_{u\in\mathcal{U}}\mathds{1}_{\left\{u\in \mathcal{U}_{k}, T-\tau^*_k \le t_u\le T-\tau_{k}\right\}}$ is the number of active users served by the DBS in cluster $\xi_{k}$, while $\sum_{u\in\mathcal{U}}\mathds{1}_{\left\{0\le t_u\le T\right\}}$ is the number of all active users within the studied time duration.

\vspace{-0.3cm}
\subsection{Problem Formulation}
In the considered model, an energy-constrained DBS flies among the distributed user clusters to cover access requests. The utility of this DBS can get by flying along trajectory $\boldsymbol{\xi}$ is defined as 
\begin{equation}\label{eq:utility}
\begin{split}
G\left(\boldsymbol{\xi}\right) = \sum_{k=1}^{K} w_{k} \frac{{\mu}_{k}\left(\boldsymbol{\xi}\right)}{K}.
\end{split}
\end{equation}
In other words, the utility of the DBS is the weighted sum of cluster hit rates. The goal of this DBS is to find an optimal trajectory that maximizes the expected utility with unknown intents, which is formulated as
\begin{subequations}\label{opt}
\begin{align}
    & \max_{\boldsymbol{\pi}_d}  \sum_{\boldsymbol{\xi}\in\mathcal{E}} G\left(\boldsymbol{\xi}\right) \prod_{k=1}^{K} \pi_d\left(\xi_{k+1} \left| \xi_k \right.\right)     &\tag{\ref{opt}}\\
    {\text{s.t.} \quad} & \sum_{\boldsymbol{\xi}\in\mathcal{E}} \prod_{k=1}^{K} \pi_d\left(\xi_{k+1}\left| \xi_k \right.\right) = 1, \label{c1a}\\
    {} & \sum_{\xi \in \mathcal{C} \cup \left\{O\right\}} \pi_d\left(\xi_{k+1}\left| \xi_k\right.\right) = 1, \quad \forall \boldsymbol{\xi} \in \mathcal{E}, k \in \mathcal{K}, \label{c1b}\\
    {} & 0 \leq \pi_d\left(\xi_{k+1}| \xi_k \right) \leq 1, \quad \forall \boldsymbol{\xi} \in \mathcal{E}, k \in \mathcal{K}, \label{c1c}\\
    {}&  e_{\text{k}} \leq E, \label{c1d}
\end{align}
\end{subequations}
within which \( \mathcal{E} \) is the set of all possible trajectories of the DBS. \( \pi_d\left(\xi_{k+1}\left| \xi_k \right.\right) \) is the probability that the DBS moves toward cluster \( \xi_{k+1} \in \mathcal{C} \cup \left\{O\right\} \) after successfully serving cluster \( \xi_k \), and \( \boldsymbol{\pi}_d = \left[\pi_d\left(\xi_{k+1}\left| \xi_k \right.\right)\right]_{k \in \mathcal{K}} \) defines the probability distribution of its flying directions (i.e., next cluster to be served). Constraint (\ref{c1a}) means that the DBS has to fly along viable trajectories from the set \( \mathcal{E} \). (\ref{c1b}) ensures that, at any step \( k \) of its trajectory, the DBS has to choose either serve a cluster \( \mathcal{C} \) or return to the origin. Constraint (\ref{c1c}) ensures that the probabilities are valid. Meanwhile, (\ref{c1d}) means that the energy consumed by the DBS for serving user clusters and returning to the origin,  $e_{k} = \sum_{\kappa=1}^{k} \left( e^{\text{SLF}}_{\kappa-1,\kappa} + e^{\text{CIR}}_{\kappa} \right)+\frac{d_{{\kappa},{O} }}{V}\left( c_1 V^3 + \frac{c_2}{V} \right)$, should not exceed this DBS's energy capacity, and $E$ denotes the maximum available energy of the DBS. From (\ref{opt}), we can see that, the DBS seeks to fly over the trajectory that maximizes the expected utility $G\left(\boldsymbol{\xi}\right) \prod_{k=1}^{K} \pi\left(\xi_{k+1}\left| \xi_k \right.\right)$. 

The expected DBS utility is determined by the DBS service priorities $w_k$, the internal and external PNO properties, such as communication performance, wind velocity, user service requests, flying velocity, and energy budget of the DBS. Thus, once some physical network properties change, the DBS needs to change its maneuvering strategy. In fact, if a physical network property change is not correctly perceived (e.g. due to delayed or inaccurate sensing of the changes), the performance of the DBS will degrade
\begin{theorem}
    Errors on the perception of physical network property, $ \boldsymbol{ \Delta s}_k$, can induce degraded DBS utility.
\end{theorem}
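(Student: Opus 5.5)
The plan is to compare the utility obtained by a policy optimized for the true physical state $\boldsymbol{s}_k$ with the utility obtained by a policy optimized for a perceived state $\hat{\boldsymbol{s}}_k=\boldsymbol{s}_k+\boldsymbol{\Delta s}_k$. Both policies are evaluated in the true environment. Write $J(\boldsymbol{\pi}_d;\boldsymbol{s})=\sum_{\boldsymbol{\xi}\in\mathcal{E}}G(\boldsymbol{\xi};\boldsymbol{s})\prod_{k}\pi_d(\xi_{k+1}|\xi_k)$ for the objective of (\ref{opt}). Here the dependence on the PNO properties is made explicit: wind velocity, user requests $\boldsymbol{b},\boldsymbol{t}$, and channel parameters enter through $c_u$, $D^*_k$, $\tau_k$, $\tau^*_k$, $\mu_k$ and the energy $e_k$. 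Let $\boldsymbol{\pi}^*(\boldsymbol{s})$ denote a maximizer of (\ref{opt}) under state $\boldsymbol{s}$. The claim to establish is
\begin{equation*}
J(\boldsymbol{\pi}^*(\hat{\boldsymbol{s}});\boldsymbol{s})\le J(\boldsymbol{\pi}^*(\boldsymbol{s});\boldsymbol{s}),
\end{equation*}
with strict inequality for some nonzero $\boldsymbol{\Delta s}_k$. This is how I would read ``can induce degraded utility''.

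The weak inequality is immediate. The feasible set defined by (\ref{c1a})--(\ref{c1c}) does not depend on the perceived state, apart from the energy constraint (\ref{c1d}). I would handle (\ref{c1d}) separately: if the perceived state underestimates energy use, the policy $\boldsymbol{\pi}^*(\hat{\boldsymbol{s}})$ may be infeasible in the true environment. The DBS then fails to return, and its utility is degraded by convention. If the policy is feasible, optimality of $\boldsymbol{\pi}^*(\boldsymbol{s})$ gives the inequality directly.

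For strictness, I would use that $J$ is linear in the trajectory distribution. Hence $\boldsymbol{\pi}^*(\boldsymbol{s})$ can be taken deterministic, concentrated on $\boldsymbol{\xi}^*(\boldsymbol{s})=\arg\max_{\boldsymbol{\xi}}G(\boldsymbol{\xi};\boldsymbol{s})$ over feasible trajectories. The gap then equals $G(\boldsymbol{\xi}^*(\boldsymbol{s});\boldsymbol{s})-G(\boldsymbol{\xi}^*(\hat{\boldsymbol{s}});\boldsymbol{s})$, which is positive whenever the perturbation changes the argmax to a strictly worse trajectory.

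I would exhibit such a perturbation explicitly. Take two clusters $k,k'$ with $w_k>w_{k'}$. Perturb the perceived wind on the SLF leg towards $\xi_k$, so that the perceived $V$ in the flying-velocity model drops and $e^{\text{SLF}}$ rises. The perceived energy then violates (\ref{c1d}), or the perceived $\tau^*_k$ shrinks so that fewer users fall in the window defining $\mathcal{U}^*_k$. The perceived $\mu_k$ decreases, and the perceived-optimal trajectory switches to visiting $k'$. In the true state the original route remains feasible and yields higher weighted hit rate, so the loss is at least $(w_k\mu_k-w_{k'}\mu_{k'})/K>0$. A similar construction works by perturbing $b_u$ or $t_u$, since these shift $D^*_k$ and the indicator windows in (\ref{eq:ssrate_single}).

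The main obstacle is that $\mu_k$ is a piecewise-constant function of the state, through indicators and the max in (\ref{eq:3_single}). The degradation is therefore not continuous in $\boldsymbol{\Delta s}_k$: a small error may cause no loss, while crossing a threshold causes a discrete jump. I would present the result existentially, as above. If a quantitative bound is needed, I would bound the loss by $2\sup_{\boldsymbol{\xi}}|G(\boldsymbol{\xi};\hat{\boldsymbol{s}})-G(\boldsymbol{\xi};\boldsymbol{s})|$, using the standard argmax perturbation argument, and relate that supremum to $\boldsymbol{\Delta s}_k$ via the timing quantities $\tau_k,\tau^*_k$.
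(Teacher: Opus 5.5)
Your proposal is correct. It rests on the same core fact as the paper's proof: a decision that is optimal for the true state cannot be beaten by one computed from a misperceived state and then executed under the true dynamics.

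The packaging differs. Appendix A works in a closed-loop state-feedback model in which the error enters at a single step $k$. It replaces $\pi(a_k|\boldsymbol{s}_k)$ by $\pi(a_k|\boldsymbol{s}_k+\boldsymbol{\Delta s}_k)$, while keeping the true transitions $P(\boldsymbol{s}_{k+1}|\boldsymbol{s}_k,a_k)$ and the true-state policy at all other steps. It then concludes $J(\boldsymbol{s})\ge J(\boldsymbol{s}+\boldsymbol{\Delta s}_k)$ by an informal appeal to optimality. That one-step localization is what Theorem~2 later quantifies through $L_{\pi}$.

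You instead work directly with (\ref{opt}), and let the misperceived state determine a whole plan that is evaluated in the true environment. You also treat the energy constraint (\ref{c1d}) explicitly, whereas the paper ignores feasibility. Most importantly, you prove \emph{strict} degradation through the argmax switch and an explicit instance. Strictness is what ``can induce degraded utility'' actually asserts, and the paper's argument delivers only the weak inequality, so your version is the more complete one.

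Two small repairs are needed in your construction. First, $e^{\text{SLF}}_{k,k+1}=(2d_r+d_{k,k+1})(c_1V^2+c_2V^{-2})$ is not monotone in $V$. Choose the sign of the wind perturbation according to which side of $(c_2/c_1)^{1/4}$ the nominal speed lies, or rely only on the monotone timing effect through $\tau^*_k$. Second, the loss $(w_k\mu_k-w_{k'}\mu_{k'})/K$ is exact only in a minimal instance (e.g.\ a single-visit trajectory), because rerouting shifts arrival times and hence the hit rates of later clusters.

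Finally, the ``degraded by convention'' clause for infeasible plans is a modeling choice rather than a derivation. Your strict example does not need it.
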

\begin{proof}
    See Appendix A.
\end{proof}
In fact, the performance loss increases exponentially with the errors of physical network property perception as proved in Theorem 2.
\begin{theorem}
     The performance loss of DBS increases with property perception error $||\Delta \boldsymbol{s}_k||$ at step $k$ of the trajectory, in proportion to $e^{-L_{\pi}\epsilon}$. Here, $\epsilon$ is the bound of perception error, i.e., $||\boldsymbol{\Delta s}_k||\le\epsilon$. $L_{\pi}$ is the local Lipschitz constant. 
\end{theorem}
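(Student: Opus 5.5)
Because the theorem is stated loosely, I will first make it precise in terms of the trajectory distribution and then bound the loss through the policy. The DBS policy $\pi_d(\xi_{k+1}\mid\xi_k)$ is really a function of the perceived state, so I write it as $\pi_d(\cdot\mid \boldsymbol{s}_k)$, where $\boldsymbol{s}_k$ collects the PNO properties (wind, requests, energy, locations) at step $k$. A misperceived state $\boldsymbol{s}_k+\boldsymbol{\Delta s}_k$ produces the policy $\pi_d(\cdot\mid\boldsymbol{s}_k+\boldsymbol{\Delta s}_k)$. The performance loss is
\begin{equation*}
\Delta G=\sum_{\boldsymbol{\xi}\in\mathcal{E}}G(\boldsymbol{\xi})\Bigl(\prod_{k}\pi_d(\xi_{k+1}\mid\boldsymbol{s}_k)-\prod_{k}\pi_d(\xi_{k+1}\mid\boldsymbol{s}_k+\boldsymbol{\Delta s}_k)\Bigr).
\end{equation*}
I will use Theorem 1 only to get the sign, i.e. $\Delta G\ge 0$ because the unperturbed policy is optimal for (\ref{opt}). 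The goal is then to show that the ratio of the perturbed to the true trajectory probability is controlled exponentially by $L_\pi\epsilon$.

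The key assumption is local Lipschitz continuity of the log-policy in the state: $|\log\pi_d(\xi'\mid\boldsymbol{s})-\log\pi_d(\xi'\mid\boldsymbol{s}')|\le L_\pi\|\boldsymbol{s}-\boldsymbol{s}'\|$ for states in a neighborhood. This is natural for a softmax policy whose logits are $L_\pi$-Lipschitz. It gives, per step,
\begin{equation*}
e^{-L_\pi\epsilon}\le \frac{\pi_d(\xi_{k+1}\mid\boldsymbol{s}_k+\boldsymbol{\Delta s}_k)}{\pi_d(\xi_{k+1}\mid\boldsymbol{s}_k)}\le e^{L_\pi\epsilon}
\end{equation*}
whenever $\|\boldsymbol{\Delta s}_k\|\le\epsilon$. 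Substituting the lower ratio into the perturbed term of the trajectory sum yields the bound
\begin{equation*}
\Delta G\le G^*\bigl(1-e^{-L_\pi\epsilon}\bigr),
\end{equation*}
where $G^*$ is the optimal expected utility and only the step with the perception error is perturbed. If every step is perturbed, the factor becomes $e^{-KL_\pi\epsilon}$. The perturbed utility is therefore bounded below by $G^*e^{-L_\pi\epsilon}$. This is the form in which I read the statement: the retained utility decays like $e^{-L_\pi\epsilon}$, so the loss grows as $\epsilon$ increases, which is monotone in $\|\Delta\boldsymbol{s}_k\|$.

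To close the argument, I will note that $1-e^{-L_\pi\epsilon}$ is increasing in $\epsilon$. I will also give the matching upper ratio $e^{L_\pi\epsilon}$ to argue tightness up to constants, using the renormalization in (\ref{c1b}).

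The main obstacle is justifying the Lipschitz bound on $\log\pi_d$ rather than on $\pi_d$ itself. A plain Lipschitz bound on the probabilities would give only a linear loss, not an exponential one. I will handle this by restricting to policies bounded away from zero on the feasible set $\mathcal{E}$; the energy constraint (\ref{c1d}) excludes infeasible branches, which would otherwise have zero probability. Under that restriction, $L_\pi$ can be taken as the Lipschitz constant of $\pi_d$ divided by its minimum value, and this is why $L_\pi$ is only a \emph{local} constant.

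A secondary issue is that a perception error also changes the hit rates $\mu_k$ through $\tau_k$ and $D^*_k$. I will absorb this into $G^*$ by evaluating $G$ under the true dynamics, so that only the policy is affected by the misperception.
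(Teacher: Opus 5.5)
Your proposal is correct and follows the paper's own argument. The paper likewise assumes the local bound $|\log\pi(a_k^*|\boldsymbol{s}_k)-\log\pi(a_k^*|\boldsymbol{s}_k+\boldsymbol{\Delta s}_k)|\le L_{\pi}\epsilon$, notes that every factor except the step-$k$ policy term cancels, and concludes $\hat P(\boldsymbol{\xi}^*)/P(\boldsymbol{\xi}^*)\ge e^{-L_{\pi}\epsilon}$. The difference is that the paper applies this only to the optimal trajectory and then simply asserts that the expected utility scales the same way. Your termwise application over all of $\mathcal{E}$, which is valid because $G(\boldsymbol{\xi})\ge 0$ for nonnegative weights $w_k$, actually proves $J(\boldsymbol{s}+\boldsymbol{\Delta s}_k)\ge e^{-L_{\pi}\epsilon}J(\boldsymbol{s})$ and so closes that step. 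One caveat: your planned ``tightness'' remark via the upper ratio $e^{L_{\pi}\epsilon}$ gives nothing beyond the trivial $J(\boldsymbol{s}+\boldsymbol{\Delta s}_k)\le J(\boldsymbol{s})$, so you can drop it.
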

\begin{proof}
    See Appendix B.
\end{proof}
\noindent In other words, the accurate, timely perception of network changes remains critical for appropriate decisions in the drone network. However, considering the fact that the user requests and the wind velocity can change without a trackable pattern, it will be very challenging for the myopic, portable DBS to accurately sense or predict them. 
Since the DT system can access all the property changes (through DT synchronization and prediction), it will be ideal if the DT can directly suggest strategy adjustments to the DBS upon network changes. However, traditional DT systems \cite{9851770,dt-without2} cannot infer the service intent of a DBS, such that they are not applicable to the considered algorithm. To this end, we propose an IRL assisted DT actuation solution that infers DBS intents, and evolves the DBS flying strategy at any physical network changes.

% 9709537

% \vspace{0.3cm}
\section{Proposed Inverse Reinforcement Learning Solution}
We now introduce the IRL assisted DT actuation algorithm that integrates GAN with the advantage actor-critic (A2C) framework\cite{ho2016generativeadversarialimitationlearning}. As a classical RL algorithm, A2C finds the optimal synthetic DBS trajectory by running gradient descent over the strategy space to maximize the expected utility. However, this process assumes full knowledge of the expected utility formulation including the DBS intents, which is not available to the DT system. The GAN model is deployed to generate DBS trajectories that imitate the optimal ones under a wide variety of network conditions, which enables the learning of optimal trajectories across all potential network states without access to DBS intents or intensive environmental exploration. In the following, we first introduce the components of the proposed IRL algorithm. Then, we explain how to use this algorithm to actuate DBS trajectory adjustments.

\vspace{-0.2cm}
\subsection{Components of the Proposed Algorithm}
The proposed IRL algorithm consists of six components:

% In the trajectory optimization of DBSs across various environments, the factors influencing system performance are multifaceted, making it challenging to evaluate them with a single uniform reward function.
% Unlike existing reinforcement learning-based methods for optimizing DBS trajectories, the proposed IRL-based approach does not depend on a predefined reward function. 

\begin{itemize}
    \item Agent: Our agent is the target DBS whose goal is to maximize its expected utility.
    % 物理意义
    \item Actions: The actions of the agent are the clusters or origin it targets at each time step. At step \(k\), the action is denoted as \(a_k \in \mathcal{C} \cup \{O\}\). The vector of actions up to step \(k\) is denoted as \(\mathbf{a}_k = [a_0, a_1, \dots, a_k]\), with \(a_0\) being the origin location, and
    \(a_k\) representing the \(k\)-th location the DBS targets.
    \item States: The state at each step \(k\) includes the physical DBS's internal and external properties, and is denoted as \(\boldsymbol{s}_k = [a_{k-1}, \mu_k(\xi), \boldsymbol{b}, \boldsymbol{t}, e_k, \boldsymbol{\vec{v}_{w}}]\), within which
    \begin{itemize}
        \item \(a_{k-1}\): The DBS's location at the current step, which is represented by the cluster it is currently serving.
        \item \(\mu_k(\xi)\): The hit rate achieved at step \(k\), representing the percentage of successfully served users.
        \item \(\boldsymbol{b}\): The quantities of user requests in bits. 
        \item \(\boldsymbol{t}\): The occurrence time of the user requests.
        \item \(e_k\): The total energy consumption and essential energy required for the DBS to return to the origin $O$ at the $k\text{-th}$ step, which cannot exceed the energy limitation $E$.
        \item \(\boldsymbol{\vec{v}_{w}}\): The wind velocity.

    \end{itemize}

    % \item Expert demonstrations: The expert demonstrations $\boldsymbol{\tau}_{E}$ are observed sequences of actions and states generated by an expert agent in physical network following optimal policy $\pi_{E}$. These demonstrations are pre-collect data in the DT system, enabling the DBS to infer the underlying reward function for guiding the learning of trajectory design.
    \item Reward:
    The reward of the DBS is denoted as $R({a}_k|\boldsymbol{s}_k) = w_k\mu_k(\xi)$, which represents the weighted hit rate the DBS achieves with action $\boldsymbol{a}_k$ at state $\boldsymbol{s}_k$ when following trajectory $\boldsymbol{\xi}$, such that $\sum_{k=1}^K R({a}_k|\boldsymbol{s}_k) = G(\boldsymbol{\xi})$.
    \item Critic:
    The critic network $\Upsilon_{\boldsymbol{\omega}}(\boldsymbol{s}_k)$ is a neural network that aims to predict the expected return $\mathbb{E}{\pi_{\boldsymbol{\theta}}}[\sum_{k=i}^K \gamma^{k-i} R({a}_k|\boldsymbol{s}_k)]$ for a given state $\boldsymbol{s}_k$, where $\boldsymbol{\omega}$
    denotes the parameters of this critic network and $\gamma$ represents a discount factor ranging from 0 to 1.
    \item Policy: The policy of the DBS, denoted as \(\pi_{\boldsymbol{\theta}}({a}_k | \boldsymbol{s}_k)\), defines the probability of choosing action \({a}_k\) given the current state \(\boldsymbol{s}_k\). This policy is approximated by a deep neural network parameterized by \(\boldsymbol{\theta}\), which takes the agent's current state as input and outputs the probabilities for selecting each possible action.

\end{itemize}
More specifically, the DT system will map the physical DBS into the virtual environment by using a policy network \(\pi_\theta(a_k | \boldsymbol{s}_k)\) of the IRL algorithm as its virtual replica. The policy network maps network state $\boldsymbol{s}_k$ that is monitored or predicted by the DT system into the optimal trajectory that maximizes the expected DBS utility. Thus, using this policy function, the DT system can directly actuate trajectory decisions for the DBS upon environmental changes.

% We first reformulate the problem as a Markov decision process (MDP), and then explain the differences between IRL and traditional RL approaches.

\vspace{-0.2cm}
\subsection{Learning Algorithm}
%\vspace{-0.1cm}
In the proposed algorithm, the goal of the DT system is to find the critic network that explains the trajectories of the physical DBS with an expected DBS utility, $\Upsilon{\boldsymbol{\omega}}(\boldsymbol{s}_k)$, and a strategy, $\pi_{\boldsymbol{\theta}}({a}_k | \boldsymbol{s}_k)$, that maximizes this expected utility at any physical network state. Given such goals, the DT system needs to update the critic network $ \Upsilon_{\boldsymbol{\omega}}(\boldsymbol{s}_k)$ toward the minimal difference between the highest expected utility achieved with learned optimal policy $\pi^*_{\boldsymbol{\theta}}(a_k|\boldsymbol{s}_k)$, i.e., $\mathbb{E}{\pi^*_{\boldsymbol{\theta}}}[\Upsilon_{\boldsymbol{\omega}}(\boldsymbol{s}_k)]$, and highest achievable utility $\mathbb{E}{\pi_{E}}[\Upsilon_{\boldsymbol{\omega}}(\boldsymbol{s}_k)]$ using optimal strategy $\pi_{E}$. Thus, the critic parameters are updated with the goal of 

\begin{equation}\small 
\label{eq:update-policy}
\begin{split}
    \min_{\boldsymbol{\omega}} \!\left(\max_{\boldsymbol{\theta}}\! -H(\pi_\theta) \!+\! \mathbb{E}{\pi_{\boldsymbol{\theta}}}[\Upsilon{\boldsymbol{\omega}}(\boldsymbol{s}_k)]\right) \!- \!\mathbb{E}{\pi_E}[\Upsilon_{\boldsymbol{\omega}}(\boldsymbol{s}_k)] \!-\! \psi(\Upsilon_{\boldsymbol{\omega}}),
\end{split}    
\end{equation}
within which $\max_{\boldsymbol{\theta}} -H(\pi_{\boldsymbol{\theta}}) + \mathbb{E}{\pi_{\boldsymbol{\theta}}}[\Upsilon{\boldsymbol{\omega}}(\boldsymbol{s}_k)]$ represents the update of policy $\pi_{\boldsymbol{\theta}}({a}_k | \boldsymbol{s}_k)$ towards the maximal utility. $H(\pi_{\boldsymbol{\theta}}) = \sum_{k=0}^K \gamma^k \mathbb{E}_{\pi_{\boldsymbol{\theta}}}[-\log \pi_{\boldsymbol{\theta}}(\Upsilon{\boldsymbol{\omega}}(\boldsymbol{s}_k))]$ is a discounted entropy that promotes exploration in early policy training steps, and then encourages exploitation upon training convergence. Meanwhile, $\psi(\Upsilon_{\boldsymbol{\omega}})$ is a convex reward regularizer that prevents overfitting during training on finite datasets and ensures the learning algorithm properly recognizes and imitates physical trajectories.

In real world wireless networks, the DBS-environment interaction, especially for the interaction with environment under rare, or unexpected user requests and weather conditions, is very limited. To update the critic and policy networks through such limited observation of physical networks, that is, limited benchmarked optimal DBS trajectories, the DT system leverages a GAN model to generate synthetic trajectories that preserve the policy as the physical DBS. This GAN consists of a generator, which aims to generate trajectories that can be matched to the physical DBS trajectories, and a discriminator, which distinguishes the generated trajectories from the physical trajectories. This way, the GAN model can imitate DBS trajectories without directly consulting its underlying intent, promising a privacy preserved, intent-based control of the DBS. Being integrated into the IRL framework, this GAN will take the DBS policy ($\pi_{\boldsymbol{\theta}}(a_k|\boldsymbol{s}_k)$) as its generator. Given the conflicting goals on making the generated trajectories like-real and locating the generated trajectories, the generator $\pi_{\boldsymbol{\theta}}(a_k|\boldsymbol{s}_k)$ and discriminator $\mathcal{D}_{\eta}$ (a neural network parameterized by $\eta$) are updated by solving the following min-max game:
\begin{equation} \label{eq:update-d}
\begin{split}
    \max_{\boldsymbol{\eta}} \min_{\boldsymbol{\theta}} 
    \mathbb{E}_{\pi_{\boldsymbol{\theta}}}
    [\log(\mathcal{D}_{\boldsymbol{\eta}}(\boldsymbol{s}_k, a_k))] \!+ \!
    \mathbb{E}_{\pi_E}[\log(1\! - \!
    \mathcal{D}_{\boldsymbol{\eta}}(\boldsymbol{s}_k, a_k))].
\end{split}
\end{equation}
Then, the resulting generator $\pi_{\boldsymbol{\theta}}$ can accurately imitate the observed DBS policies, and guide the DBS once similar user requests and weather are witnessed or predicted. Here, the expectation is taken over current policy $\pi_{\boldsymbol{\theta}}$ and historical policy $\pi_{E}$. Specifically, DBS's policy $\pi_{\boldsymbol{\theta}}$ is updated using discriminator's output $\log(D_{\boldsymbol{\eta}}(\boldsymbol{s}_k, a_k))$ as a surrogate reward, encouraging the policy toward regions of the state-action space that resemble the physical DBS behavior. Then, in cases where unexpected user requests are monitored or weather is predicted by the DT, policy $\pi_{\boldsymbol{\theta}}$ can directly guide the DBS as if the DBS had already been trained in similar wireless environments, including the real world ones and synthetic ones that are generated by the GAN model.

\subsection {Implementation Analysis}

\vspace{-0.3 cm}

\begin{algorithm}
\setlength{\abovedisplayskip}{-3 pt}
\setlength{\belowdisplayskip}{-3 pt}
\SetAlgoLined
\SetKwInput{KwInput}{Input}
\SetKwInput{KwOutput}{Output}
\SetKwRepeat{KwRepeat}{repeat}{until}
\KwInput{Policy network parameter $\boldsymbol{\theta}^{(1)}$; discriminator network $\boldsymbol{\eta}^{(1)}$; physical trajectories $\mathcal{O}_{\text{phys}}$}

\For{\normalfont{each episode} $i=1,2,\ldots,N$}{
    Collect physical DBS trajectories in $\mathcal{O}_{\text{phys}}$ with balanced cost of physical network property synchronization and DBS trajectory actuation optimality using our method in \cite{Yu2025JIOT_DTN}\;
    
    Generate synthetic trajectories with policy $\pi_{\boldsymbol{\theta}}(a_k|\boldsymbol{s}_k)$\ and collect min $\mathcal{O}_{\text{sim}}$\;
    
    Update discriminator $\boldsymbol{\eta}^{(i)}$ through gradient ascent based on (\ref{eq:update-d})\;
    
    Update policy $\boldsymbol{\theta}^{(i)}$ based on (\ref{eq:update-d})\ to minimize difference between generated trajectories and physical trajectories.
}

\Return Optimized policy network $\pi_{\boldsymbol{\theta}}^{*}$
\caption{The Training Process of the IRL based DT actuation Algorithm}
\label{alg:irl}
\end{algorithm}

\vspace{-0.2 cm}
The training workflow is outlined in Algorithm \ref{alg:irl}. At the beginning of the algorithm, the DT system randomly initializes its policy network parameter $\boldsymbol{\theta}^{(1)}$ and discriminator network parameters $\boldsymbol{\eta}^{(1)}$. The DT then collects $N$ observations of DBS trajectories in a set $\mathcal{O}_{\text{phys}}$, generates synthetic DBS trajectories with policy $\pi_{\boldsymbol{\theta}}(a_k|\boldsymbol{s}_k)$, and stores them in set $\mathcal{O}_{\text{sim}}$. The discriminator $\mathcal{D}_{\eta}$ will be updated through gradient descent to maximize the ability to distinguish between trajectories from set $\mathcal{O}_{\text{phys}}$ and set $\mathcal{O}_{\text{sim}}$. Policy $\pi_{\boldsymbol{\theta}}(a_k|\boldsymbol{s}_k)$ will be updated through policy gradient to minimize the difference between its generated trajectories and the physical trajectories. In essence, this training procedure is repeated as the DT server simulates the physical environment, generates synthetic DBS trajectories, and updates both discriminator parameters $\boldsymbol{\eta}$ and policy parameters $\boldsymbol{\theta}$ to faithfully replicate the physical DBS. Upon the convergence of this IRL algorithm, the DT system will take policy network $\pi_{\boldsymbol{\theta}}(a_k|\boldsymbol{s}_k)$ as the virtual replication of the DBS. Then, once it detects a change in the network state, the DT system uses $\pi_{\boldsymbol{\theta}}(a_k|\boldsymbol{s}_k)$ to adjust DBS trajectories, while the actuation of these adjustments through distributed BSs constitutes the only communication overhead. Meanwhile, the training process is implemented over DT simulation, meaning the update of the IRL model is performed separately from the physical network operations, which reduces the training overhead in terms of time, energy, and DBS hardware.

We then analyze the implementation costs of the IRL algorithm training process. First, the computational complexity of the proposed algorithm is $\mathcal{O}(NK(C_{\pi}+C_{\eta}))$. Here, $N$ represents the number of episodes required for convergence, and $K$ captures the number of synthetic decision steps that are generated and processed in each episode. $C_{\pi}$ defines the computational cost for one step of gradient descent in (\ref{eq:update-d}) for the policy update, while $C_{\eta}$ defines the computational cost for one step of gradient ascent in (\ref{eq:update-d}) for the discriminator update. Notice that, the computation is offloaded from the DBS to the DT server, such that it is more commendable for the resource constrained drone networks compared to traditional onboard computation methods \cite{Zhu2023UAVAoI,meng-2023-multi-uav,guan-2021-uav-trajectory}. Second, the communication overhead of the proposed method includes only the selected maneuvering decisions. These overheads are much less than those of traditional DBS mobility control methods \cite{Liu2024TWC_ISAC,Huang2020UAV, Pervej2025MILCOM} that require the transmission of high dimensional situational awareness or policy model updates among distributed sensors, BSs, and DBSs. Finally, the DBS trajectory actuation performance of the proposed IRL-inspired DT is highly correlated with the number of collected physical DBS trajectories $\mathcal{O}_{phys}$. Specifically, the DT actuation performance increases with the number of collected physical DBS trajectories at the scale of $\mathcal{O}(N_p^{-1/2})$ \cite{Liu2022ICML_GAIL}, with $N_p$ being the number of observed physical trajectories. In this paper, we will leverage our previous work on DT synchronization \cite{Yu2025JIOT_DTN} to collect physical DBS trajectories to balance the communication cost and actuation performance of the DT drone network.

The actuation workflow is outlined in Algorithm \ref{alg:actuation}. At the beginning of the actuation, the DT system deploys optimized policy network $\pi_{\boldsymbol{\theta}}^{*}$. The DT then keeps monitoring and predicting physical current network state $\boldsymbol{s}_k$ and computes action $a_k^*$ based on optimal policy $\pi_{\boldsymbol{\theta}}^{*}$, guiding the physical DBS to execute action $a_k^*$. This process is executed until DBS chooses to return origin $O$ or the energy capacity is reached, upon which the maneuvering of DBS ends. In essence, this actuation process is performed as the DT system senses physical network states, predicts network changes, infers DBS intents, and actuates DBS trajectory adjustments.

Using the proposed algorithm, the time for the DBS performance to recover from environmental changes (e.g., unexpected user demands or wind velocity) includes only the delay of decision transmission between the DT server and the DBS. This is much less than the ones of conventional DBS mobility control methods \cite{zhang-2019-uav-outage-icc,zeng-2017-uav-trajectory} that incur additional delays in sensing and onboard computation. As shown in Fig. \ref{Fig. algorithm12}, the actuation process is implemented with trained policy $\pi_{\theta}^*$. In summary, the online actuation only involves lightweight inference and communication overheads.

 \begin{figure}
  \centering
  %\vspace{+0.3cm}
  \includegraphics[width=0.47\textwidth]{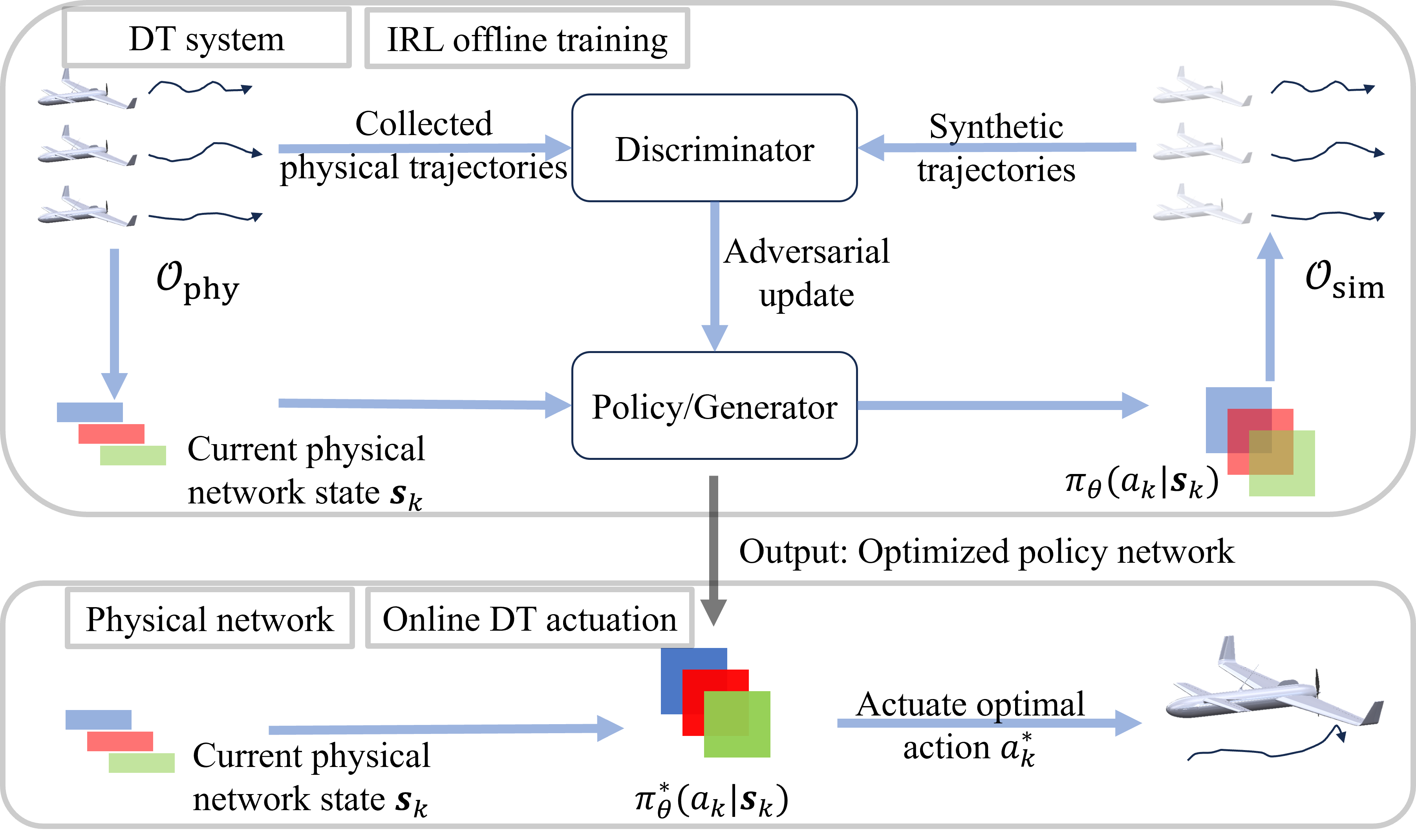}
  \caption{\footnotesize{Schematic illustration of the training and actuation of drone network DT.}}
  \label{Fig. algorithm12}
  \centering
  \vspace{-0.5cm}
\end{figure}

\begin{algorithm}
\setlength{\abovedisplayskip}{-3 pt}
\setlength{\belowdisplayskip}{-3 pt}
\SetAlgoLined
\SetKwInput{KwInput}{Input}
\SetKwInput{KwOutput}{Output}
\SetKwRepeat{KwRepeat}{repeat}{until}
\KwInput{Optimized policy network $\pi_{\boldsymbol{\theta}}^{*}$}

\While{$a_k \neq O$ \normalfont{and} $e_k\le E$}{
    Monitor and predict current physic network state $\boldsymbol{s}_k$\;

    Generate optimized action $a_k^*$ using policy network $\pi_{\boldsymbol{\theta}}^{*}$ and actuate it to the physical DBS.

}

\Return Optimal DBS trajectory $\boldsymbol{\xi}^*$
\caption{Online Actuation of Intent-based DBS Trajectory Adjustment}
\label{alg:actuation}
\end{algorithm}

\vspace{-0.5cm}
\section{Simulation Results and Analysis}

\begin{table}[htbp]
\centering
\caption{Environment and Communication Parameters}
\label{tab:env_comm_params}
\vspace{-0.2cm}
\begin{tabular}{|>{\centering\arraybackslash}p{0.16\columnwidth}|>{\centering\arraybackslash}p{0.24\columnwidth}|>{\centering\arraybackslash}p{0.16\columnwidth}|>{\centering\arraybackslash}p{0.24\columnwidth}|}
\hline
\textbf{Parameter} & \textbf{Value} & \textbf{Parameter} & \textbf{Value} \\
\hline
$H$ & 50 m & $V$ & 20 m/s \\
\hline
$E$ & 100 Wh & $b_u$ & 10 Mb \\
\hline
$B$ & 1 MHz & $f_c$ & 2 GHz \\
\hline
$P$ & 23 dBm & $N_0$ & $-174$ dBm/Hz \\
\hline
\end{tabular}
\vspace{-0.2cm}
\end{table}

\begin{comment}
    \begin{table}[htbp]
\centering
\caption{Training Hyperparameters}
\label{tab:train_hyperparams}
\vspace{-0.2cm}
\begin{tabular}{|>{\centering\arraybackslash}p{0.62\columnwidth}|>{\centering\arraybackslash}p{0.22\columnwidth}|}
\hline
\textbf{Parameter} & \textbf{Value} \\
\hline
Actor--critic learning rate & $3\times10^{-4}$ \\
\hline
Discriminator learning rate & $1\times10^{-4}$ \\
\hline
Discount factor $\gamma$ & 0.99 \\
\hline
GAE coefficient $\lambda$ & 0.95 \\
\hline
PPO clipping parameter & 0.2 \\
\hline
Value loss coefficient & 0.5 \\
\hline
Entropy coefficient & 0.05 \\
\hline
Gradient clipping norm & 0.5 \\
\hline
Mini-batch size & 128 \\
\hline
Expert mini-batch size & 512 \\
\hline
Training epochs & 300 \\
\hline
Environment steps per epoch & 4000 \\
\hline
Discriminator updates per epoch & 30 \\
\hline
PPO update repetitions & 10 \\
\hline
Replay buffer size & 20000 \\
\hline
Policy hidden layers & [128, 64] \\
\hline
Discriminator hidden layers & [256, 256] \\
\hline
Dropout rate & 0.3 \\
\hline
\end{tabular}
\end{table}
\end{comment}

\vspace{-0.2cm}

We consider a scenario in which a single DBS serves a 20 km by 20 km geographic area with a battery capacity that supports at most 100 watt-hours. Specifically, we assume that there are 52 users randomly distributed across 5 clusters. 
At every independent run of the Monte Carlo experiment studied in this section, we repeatedly deploy users at some random locations that are uniformly distributed in every cluster, and sample a user access request realization from assumed distributions. The related environment and communication parameters used in this experiment are shown in Table \ref{tab:env_comm_params}.

The policy and critic networks of the proposed algorithm are implemented as two fully connected feedforward neural networks (FNNs), each with two hidden layers of 128 and 64 neurons, respectively. In addition, the discriminator is implemented as a fully FNNs with two hidden layers of 256 neurons each. Both networks are trained with Adam optimizers, where the actor-critic learning rate and the discriminator learning rate are set to $3\times10^{-4}$ and $1\times10^{-4}$, respectively. The discount factor and the generalized advantage estimation coefficient are set as $\gamma=0.99$ and $0.95$, respectively, while the dropout rate of the discriminator is $0.3$.

\begin{figure}[t]
  \centering
  %\vspace{-0.3cm}
  \includegraphics[width=0.43\textwidth]{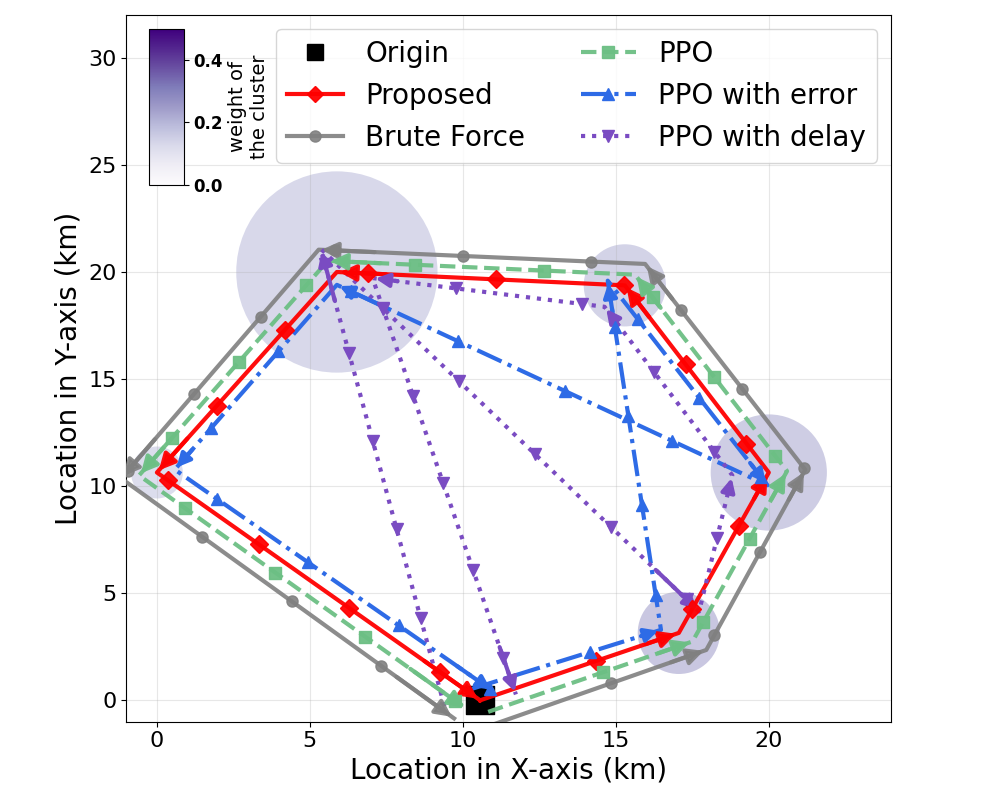}
  \caption{\footnotesize{Snapshot of trajectories resulting from all considered algorithms.}}
  \label{fig:trajectory_comparison}
  \centering
  \vspace{-0.5cm}
\end{figure}

The results of the proposed IRL algorithm are compared with 
\begin{itemize}
    \item Brute force, which assumes perfect knowledge of environmental changes including wind variations and user requests at DBS, finds the optimal DBS trajectories through an exhaustive search over all feasible DBS trajectories. The brute force results serve as a benchmark to verify the optimality of all considered algorithms in this section.
    \item Proximal policy optimization (PPO), which is an RL algorithm that searches for the optimal DBS policy through an on-policy training process. The PPO algorithm is trained and implemented over the DBS with full knowledge of DBS intents, and full exploration of environmental changes with corresponding optimal DBS trajectories. Thus, the comparison between the proposed algorithm and the PPO algorithm can justify the performance of the proposed algorithm in inferring DBS intents and actuating appropriate decisions upon environmental changes.
    \item PPO with erroneous perception, with which DBS makes decisions using the PPO algorithm derived strategies but over erroneous perception of the physical network changes (which are captured by PNO properties such as wind velocity, user requests). By comparing the proposed method with it, we can justify the DBS's performance gain from the usage of accurate modeling of PNO properties in DT.
    \item PPO with delayed perception, with which the DBS will actuate the PPO policy until it receives updated observations of PNO properties. Thus, the comparison between the proposed algorithm and the PPO with delayed perception can verify the DBS's performance gain from harvesting the timely PNO property prediction in DT.
\end{itemize}

\vspace{-0.2cm}
\subsection {Performance Visualization}

We first evaluate the performance of the proposed algorithm under one realization of user requests and wind impacts. In particular, Fig. \ref{fig:trajectory_comparison} shows a snapshot of the DBS trajectories derived from all considered algorithms. The user clusters are represented by purple circles with different color intensities, which correspond to their priorities for being served by the DBS. From the simulation results in Fig. \ref{fig:trajectory_comparison}, we can see that both the PPO algorithm and the proposed algorithm are able to find the optimal trajectories benchmarked by brute force. However, the PPO with erroneous perception and the PPO with delayed perception cannot find the optimal trajectories to cover all user requests. This stems from the fact that erroneous or delayed perception causes PPO to make decisions based on inaccurate or outdated PNO properties, which leads to deviations from the optimal DBS trajectory. This is consistent with our analytical results.

\begin{figure}[t]
  \centering
  \includegraphics[width=0.43\textwidth]{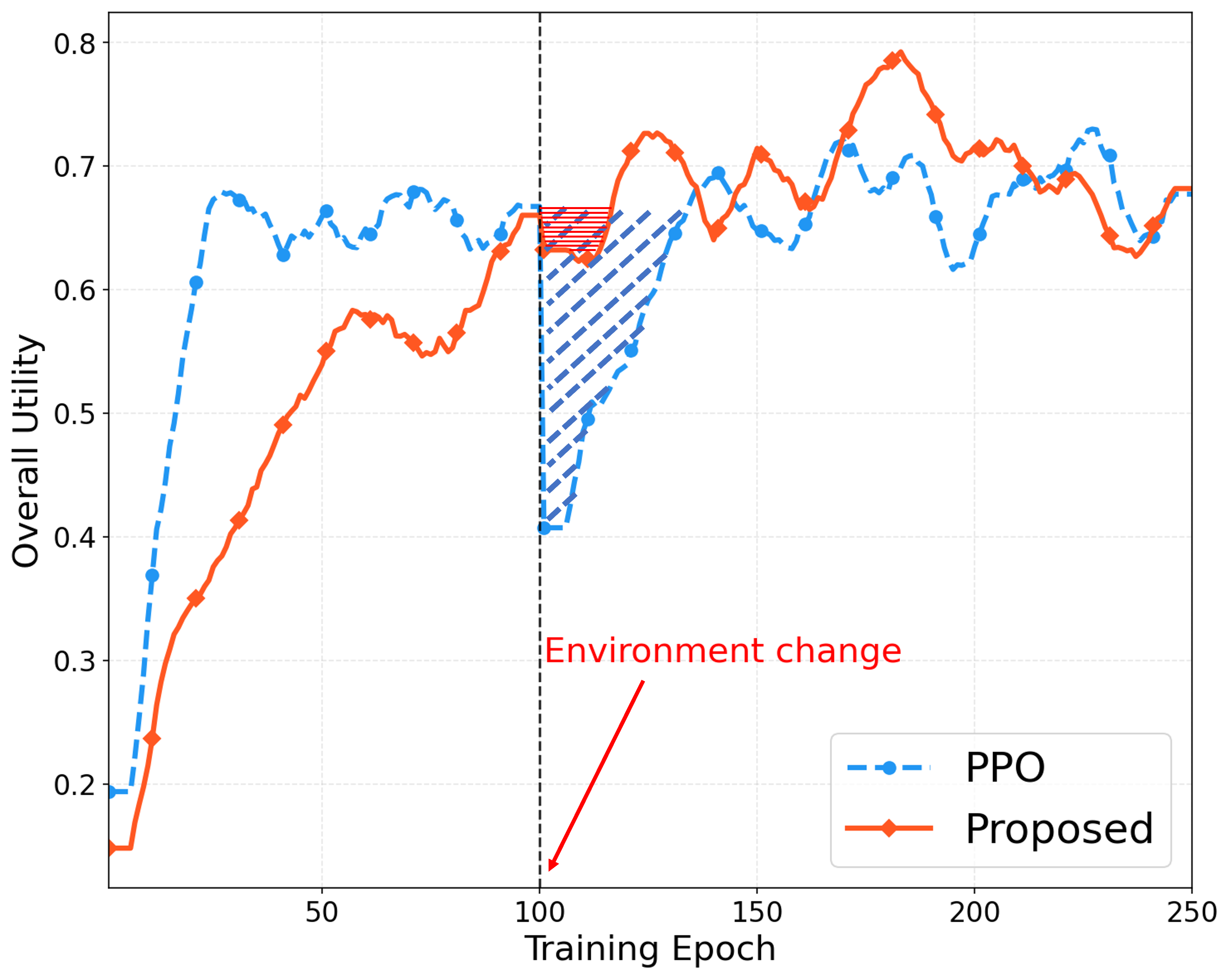}
  \caption{\footnotesize{Training performance under environment shift.}}
  \label{fig:Environment Shift}
  \centering
  \vspace{-0.5cm}
\end{figure}

Fig. \ref{fig:Environment Shift} shows the performance of the PPO policy and the proposed algorithm policy across an environment change (at approximately the 100-th epoch). From Fig. \ref{fig:Environment Shift}, we can see that the DBS performance drops up to 40\% with the PPO algorithm, while it drops only 6\% with the proposed algorithm, upon the environmental change. Meanwhile, it takes approximately 30 steps for the PPO policy, while only about 15 steps for the proposed algorithm policy, to recover DBS performance. This is because the PPO policy and the proposed algorithm policy are fitted to different learning targets. In particular, the PPO policy learns a state-action policy, whereas the policy of the proposed algorithm learns expert state-action distributions over real-world and synthetic wireless environments (defined by the PNO properties). In other words, the proposed IRL algorithm is built to be generalizable to environmental changes. Overall, as shown by the shaded area in Fig. \ref{fig:Environment Shift}, the proposed algorithm yields up to 85\% less DBS performance loss across environmental changes, in comparison to that of the PPO algorithm. This stems from the fact that the proposed algorithm uses the GAN model to synthesize numerous PNO properties, derives optimal DBS trajectories, and allows the direct imitation of such DBS trajectories upon environmental changes.

\vspace{-0.1cm}
\subsection{Evaluation of Performance}

\begin{figure}[t]
  \centering
  \includegraphics[width=0.44\textwidth]{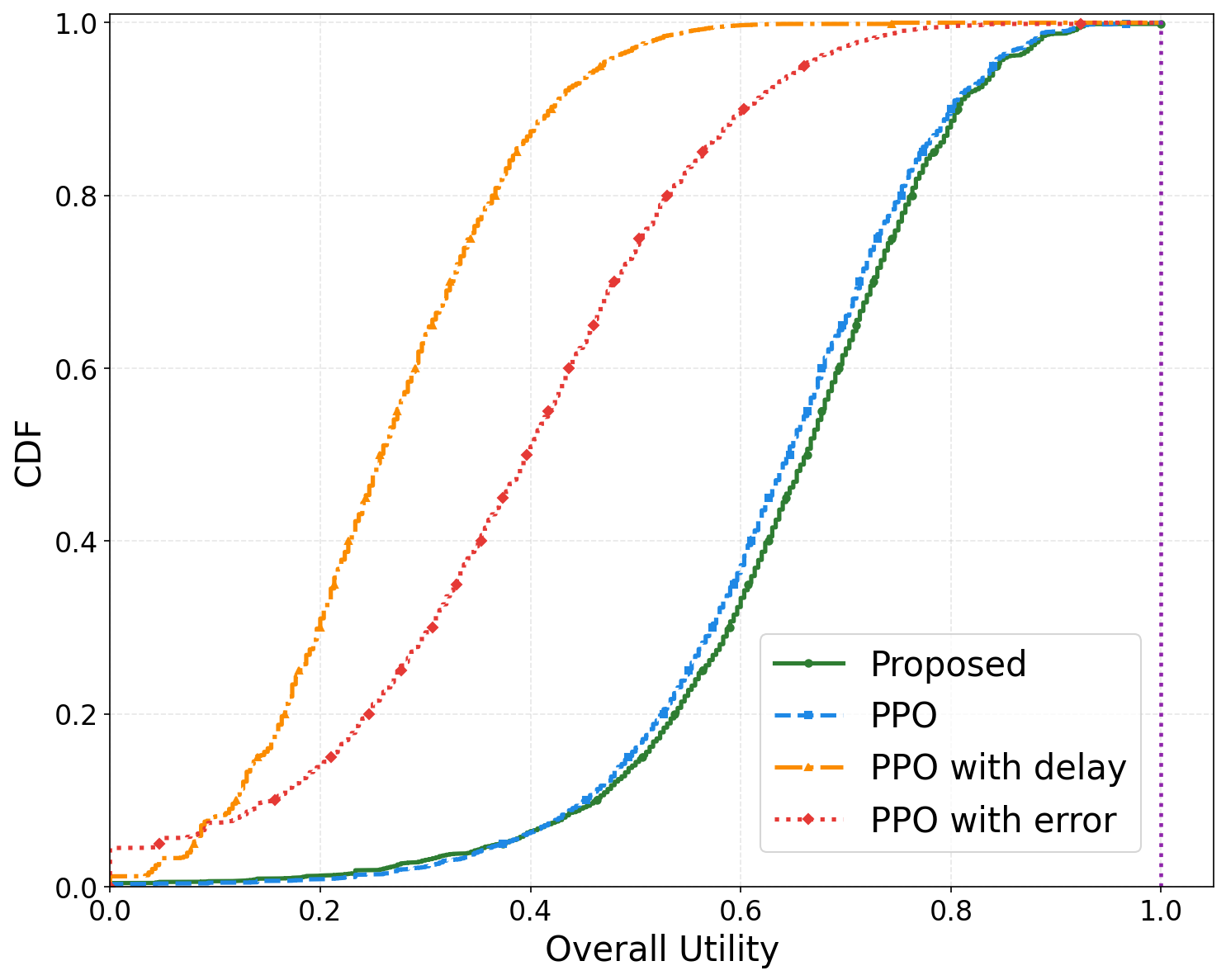}
  \caption{\footnotesize{DBS utility CDF from all considered algorithms.}}
  \label{fig:hitrate_cdf}
  \centering
  \vspace{-0.5cm}
\end{figure}

Next, we evaluate how the proposed generative IRL algorithm achieves near-optimal performance without direct access to DBS intents. Fig. \ref{fig:hitrate_cdf} shows the cumulative distribution function (CDF) of the DBS utility achieved by all considered algorithms. From Fig. \ref{fig:hitrate_cdf}, it can be observed that the CDF of the DBS utility from the PPO algorithm and the proposed algorithm stay closer to that of brute force, compared to the PPO with erroneous perception and the PPO with delayed perception. Particularly, using the proposed algorithm, the probability that the DBS gets a higher than 0.8 utility is about 84\%, which is about 80\% and 79\% higher than that of PPO with delayed perception and PPO with erroneous perception, respectively. This stems from the fact that the proposed algorithm can compensate for perception delays and errors by modeling and predicting PNO properties with a DT. Moreover, this probability is slightly higher than that of the PPO algorithm. This stems from the fact that the proposed algorithm aims to capture the underlying distribution of optimal DBS trajectories in various network conditions, which leads to more generalizable decision-making compared with the PPO algorithm. In summary, the proposed algorithm supports near-optimal DBS utility, as the DT framework compensates perception errors and delays, while the generative IRL model captures the underlying relationship between PNO properties and optimal trajectories to maintain high DBS service coverage.

\begin{figure}[t]
  \centering
  \includegraphics[width=0.44\textwidth]{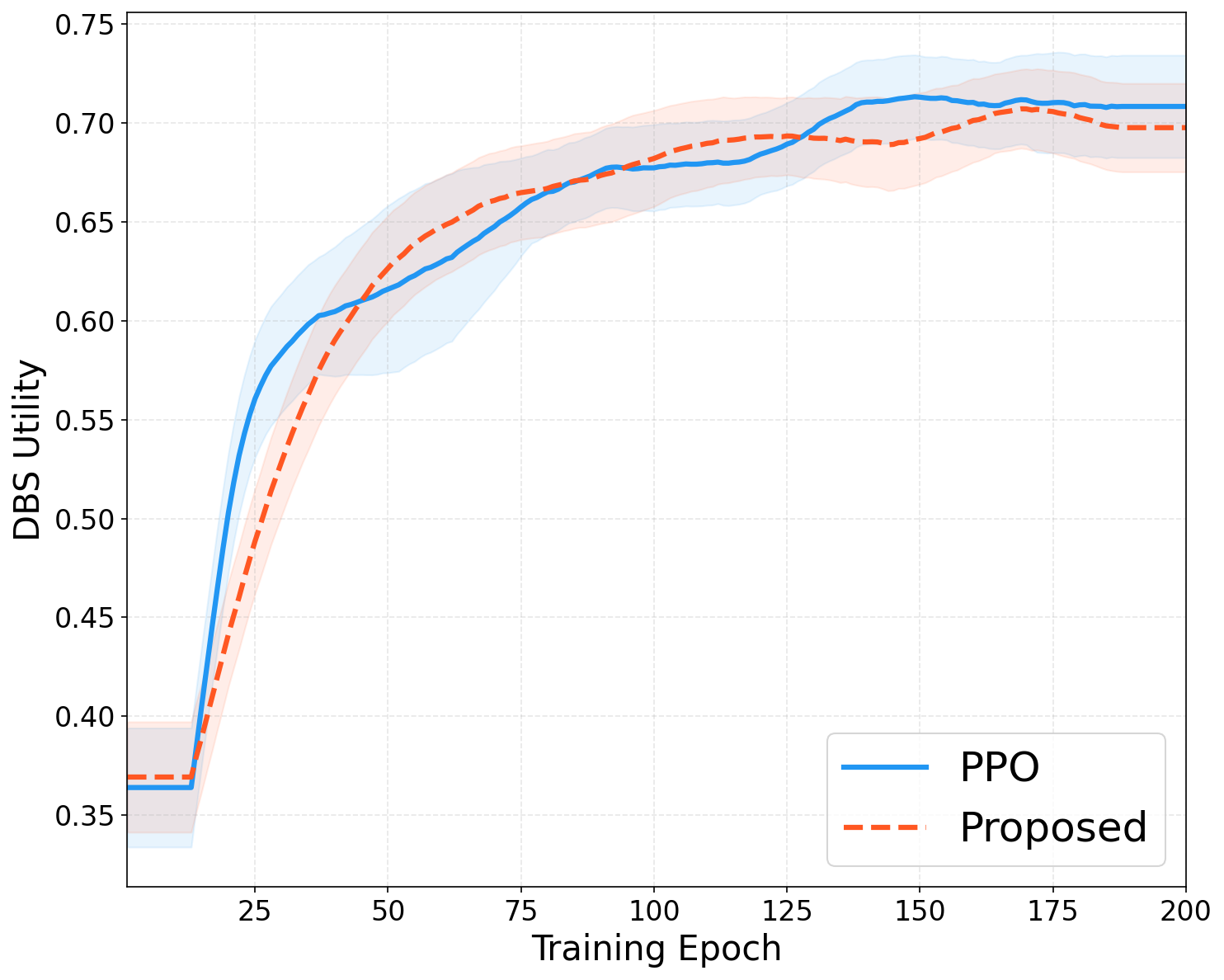}
  \caption{\footnotesize{Convergence of the proposed algorithms and PPO.}}
  \label{fig:mc_convergence}
  \centering
  \vspace{-0.5cm}
\end{figure}

Fig. \ref{fig:mc_convergence} shows the convergence of the PPO algorithm and the proposed algorithm. It can be observed that the proposed algorithm requires approximately 175 iterations to reach convergence, which is 17\% slower than that of the PPO. This is because the proposed algorithm involves an additional discriminator network and an adversarial training process, which introduces additional computation complexity compared with the direct actor-critic optimization in the PPO algorithm. However, compared to the PPO algorithm, the proposed algorithm is trained in a synthetic DT environment, such that this training complexity is commendable in the resource constrained drone networks. Meanwhile, the proposed algorithm provides an average DBS utility of 0.69, which is only 2\% lower than the one of the PPO algorithm. In other words, the generative, adversarial learning process of the proposed algorithm can achieve near-optimal DBS performance despite the lack of access to DBS intents. This stems from the fact that the proposed algorithm uses a GAN model to imitate the optimal intent-based trajectories. Specifically, the discriminator quantifies the discrepancy between generated and optimal trajectories, guiding the generator to refine its policy until the generated trajectories align with the optimal ones. In summary, the proposed algorithm can actuate near-optimal trajectories by offloading the training process into the DT system, which promises adaptive, high performance trajectory design for the DBS in dynamic environments.

\vspace{-0.1cm}
\subsection{Evaluation of Implementation Overheads}

\begin{figure}[t]
  \centering
  \includegraphics[width=0.42\textwidth]{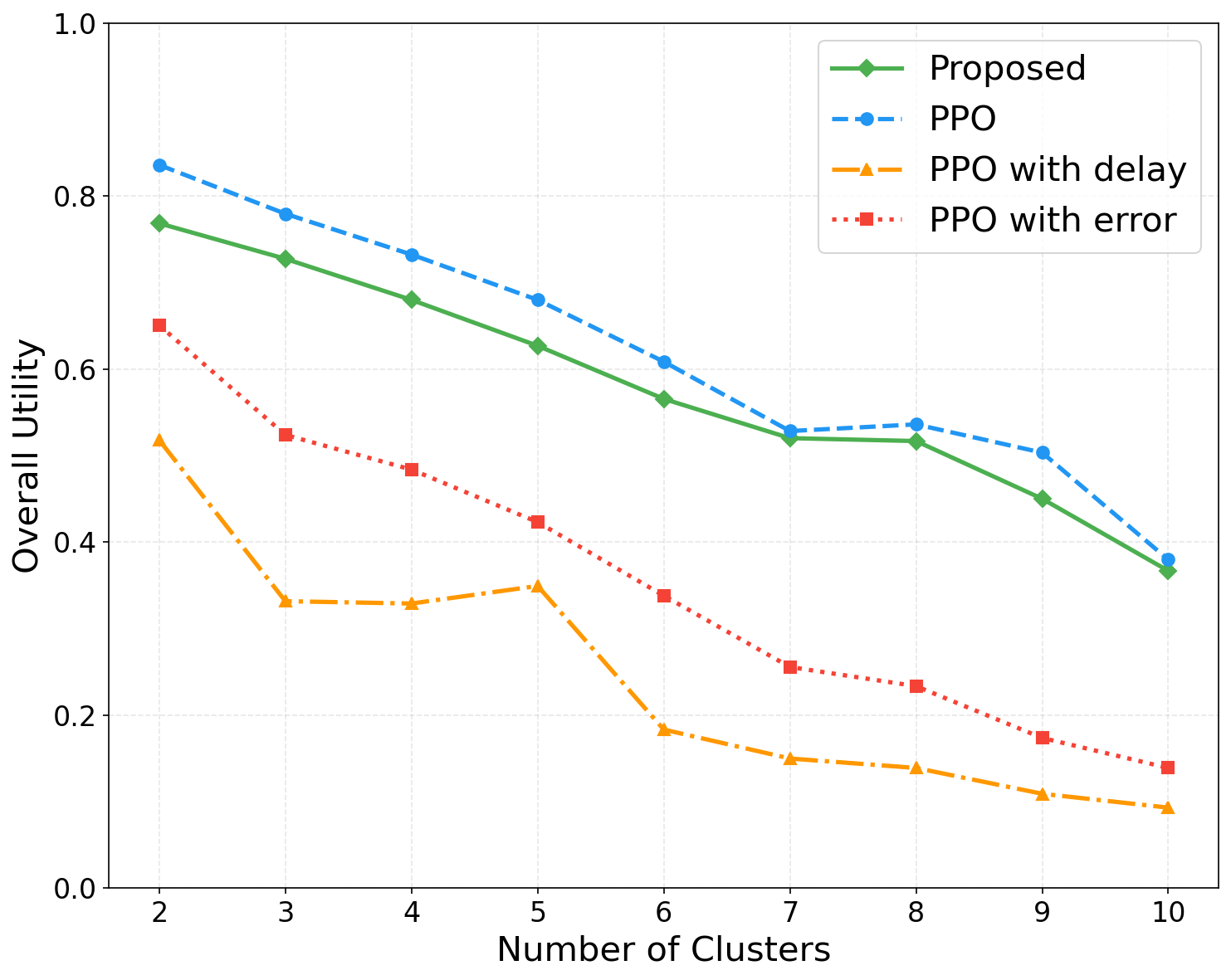}
  \caption{\footnotesize{Sensitivity of all considered algorithms.}}
  \label{fig:scalability_utility}
  \centering
  \vspace{-0.5cm}
\end{figure}

We now evaluate the implementation of the proposed algorithm. Particularly, Fig. \ref{fig:scalability_utility} verifies the sensitivity of all considered algorithms in response to the number of clusters that varies from 2 to 10. From Fig. \ref{fig:scalability_utility}, we observe that the DBS utility from all considered algorithms decreases with the number of clusters. This is because, as the number of clusters increases, user requests become more spatially dispersed and heterogeneous, so not all of them can be served by a single resource-constrained DBS. From Fig. \ref{fig:scalability_utility}, we also see that the proposed algorithm can maintain the DBS utility no less than 90\% of the one of the PPO algorithm. This performance gap stems from the fact that the proposed algorithm must derive the DBS trajectory decisions without access to DBS intents, while the PPO algorithm directly optimizes the DBS utility. Nevertheless, the proposed algorithm maintains a stable utility trend, showcasing its scalability in large sized wireless networks. Meanwhile, the proposed algorithm yields 137\% and 62\% higher average DBS utility, compared to the PPO with delayed perception and the PPO with erroneous perception, respectively, as its DT framework does not have delayed or inaccurate PNO information. It also reduces the performance degradation by 6\% and 22\%, compared to the PPO with delayed perception and the PPO with erroneous perception, respectively. This is because the DT can model and predict environmental changes, which eliminates the impact of delayed and erroneous perception across different numbers of user clusters. Overall, the proposed algorithm can actuate near-optimal decisions to the DBS in various network sizes, and remain scalable to the high-dimensional optimization.

\begin{figure}[t]
  \centering
  \includegraphics[width=0.44\textwidth]{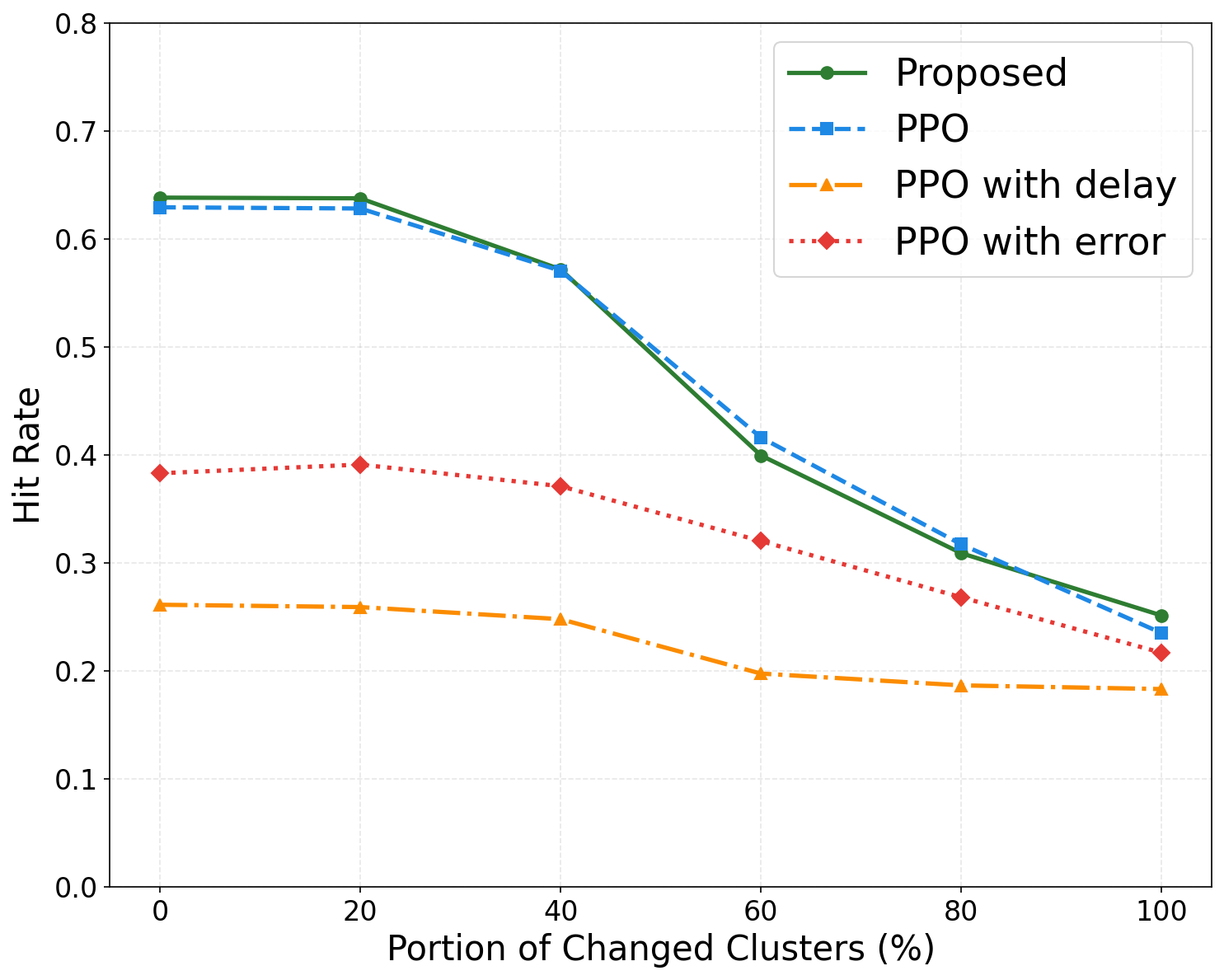}
  \caption{\footnotesize{Reliability of all considered algorithms.}}
  \label{fig:portion_hitrate}
  \centering
  \vspace{-0.6cm}
\end{figure}

Fig. \ref{fig:portion_hitrate} verifies the reliability of all considered algorithms as the properties of PNOs (e.g., user request arrival times) change. In Fig. \ref{fig:portion_hitrate}, we observe that as the portion of clusters with changed properties increases, the performance of all considered algorithms decreases. This stems from the fact that a larger proportion of changed clusters enlarges the mismatch between training and testing environments. Meanwhile, Fig. \ref{fig:portion_hitrate} also shows that the proposed algorithm maintains a hit rate comparable to the one of the PPO algorithm under environmental changes. Specifically, when all clusters have changed properties, the hit rate of the proposed algorithm decreases from about 0.64 to 0.25, corresponding to a degradation of about 61\%, which is 3\% lower than that of the PPO algorithm. This is because the DT framework can anticipate and model the shifted PNO properties, allowing the generative IRL model to adapt the DBS trajectory design to the updated environmental dynamics. We also observe that the PPO with delayed environmental perception yields the lowest hit rate, which is about 28\% lower than that of the proposed algorithm. This performance visualizes how the DBS benefits from the prediction of environmental changes within the DT system. In short, Fig. \ref{fig:portion_hitrate} verifies that, through the modeling and predicting of environmental changes, as well as the generative imitation of intent-based DBS operation in the DT system, the proposed algorithm can maintain reliable performance under environmental changes.

\begin{figure}[t]
  \centering
  \includegraphics[width=0.44\textwidth]{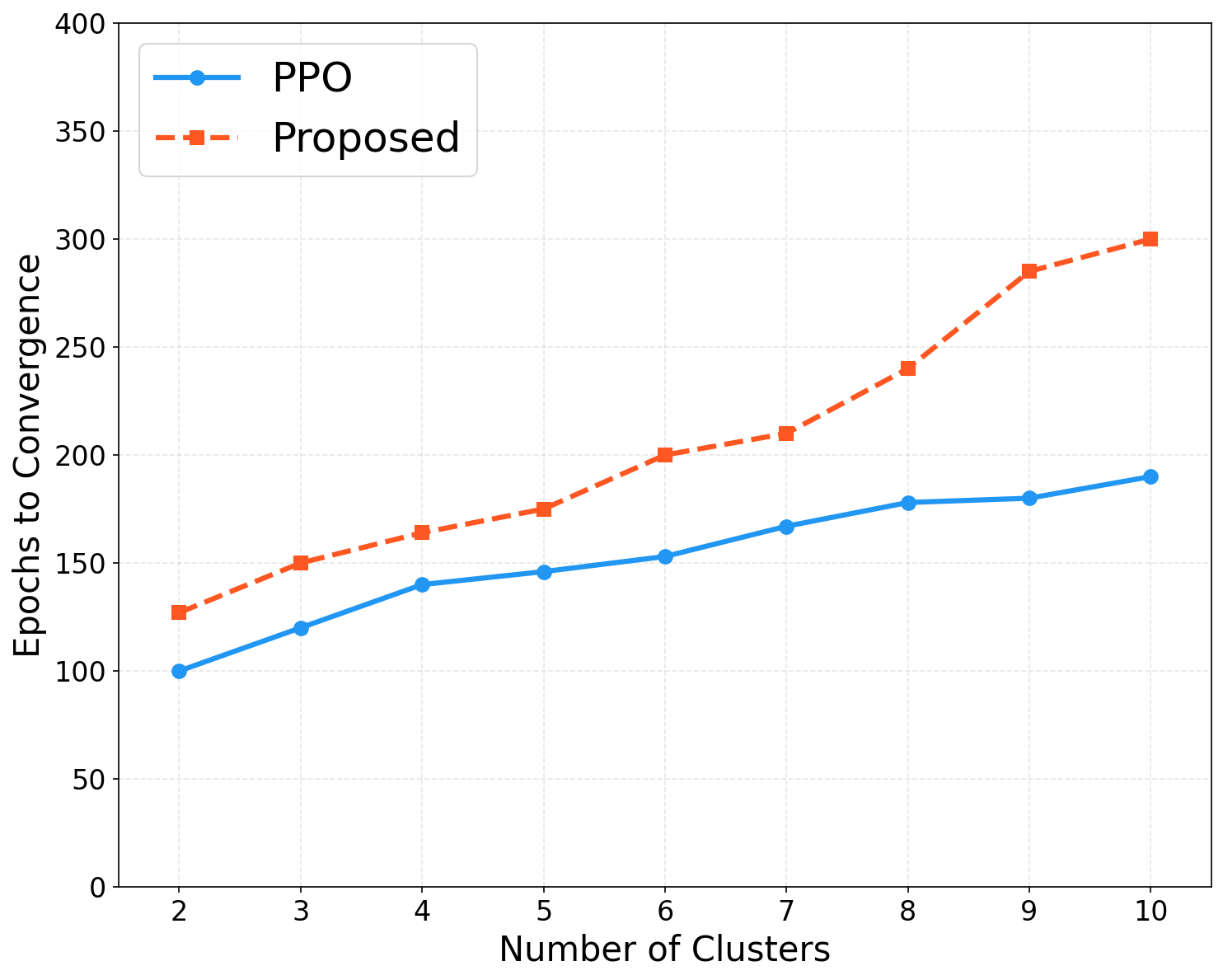}
  \caption{\footnotesize{Scalability of the proposed algorithm and PPO.}}
  \label{fig:convergence_epochs}
  \centering
  \vspace{-0.55cm}
\end{figure}

Fig. \ref{fig:convergence_epochs} visualizes the scalability of the proposed algorithm and the PPO algorithm. From this figure, we can see that both algorithms have slower convergence with a larger number of user clusters. This can be attributed to the fact that the complexity of the DBS trajectory optimization problem increases significantly with the higher number of user clusters. More specifically, we also observe that the proposed algorithm converges 2.36 times more slowly as the number of clusters changes from 2 to 10, which is 46\% higher than the one of the PPO algorithm. This is because the proposed algorithm additionally trains a discriminator and performs adversarial imitation, whereas the PPO algorithm only conducts direct actor-critic policy optimization. As the proposed algorithm offloads the training process to the digital side of the drone network, the complexity is commendable, while the compromised computation cost can contribute to significant improvements in optimality, scalability, and reliability of the drone network, even without direct access to DBS intents. Overall, these make the proposed method an appropriate solution for the intent-based DBS trajectory optimization problem.

% \begin{figure}
% % \setlength{\belowcaptionskip}{-2pt}
% % \setlength{\abovecaptionskip}{-2pt} 
%   \centering
%   \includegraphics[width=0.36\textwidth]{snapshot.eps}
%   %\renewcommand{\captionlabelfont}{\footnotesize}
%   \caption{\footnotesize{Snapshot of trajectories resulting from all considered algorithms.}}
%   \label{Fig. 3}
%   \centering
%   %\vspace{-0.7cm}
% \end{figure}

% \begin{figure}
% % \setlength{\belowcaptionskip}{-6pt}
% % \setlength{\abovecaptionskip}{-2pt} 
%   \centering
%   % \vspace{0.7cm}
%   \includegraphics[width=0.36\textwidth]{converge_largeNN_enlarge.eps}
%   \caption{\footnotesize{Convergence of all considered algorithms.}}
%   \label{Fig. 2}
%   \centering
%   % \vspace{-0.1cm}
% \end{figure}

\section{Conclusion}
%\vspace{-0.1cm}
In this paper, we designed the trajectory for an intent-driven DBS serving distributed wireless ground users under environmental dynamics. We proposed a DT framework that monitors and predicts environmental changes to automate the adjustment of DBS trajectory, despite the absence of an accurate DBS intent modeling. A generative IRL based algorithm is proposed to actuate the automated trajectory adjustment by generating and imitating historical DBS trajectories. Simulation results have shown that the proposed algorithm achieves near-optimal performance with less than 2\% difference with the on-board DBS control with full access to DBS intents, while yielding up to 137\% higher DBS utility than that of on-board DBS control with delayed perception. It is also shown that the proposed algorithm reduces performance loss under environmental changes while maintaining performance comparable to the PPO algorithm. Meanwhile, the training process of the proposed algorithm is implemented over the synthetic DT system, and does not introduce extra hardware and energy costs to the DBS.

\vspace{-0.1cm}

\bibliographystyle{IEEEtran}
\def\baselinestretch{1}
\bibliography{references}

\appendices

\section{The proof of Theorem 1}
As shown in Fig. \ref{Fig. proof1}, the DBS's perception of network states will decide the next network state, which in turn decides the reward. At step $k$ of the DBS trajectory, the true physical network properties are captured in $\boldsymbol{s}_k$, the maneuvering decision of DBS is described as $a_k$, and the perception error on DBS's perception of $\boldsymbol{s}$ is represented $\Delta \boldsymbol{s}_k$. The expected DBS utility under accurate physical network property perception is 
\begin{equation}\label{eq:true state}
J(\boldsymbol{s})=\sum_{\boldsymbol{\xi}\in\mathcal{E}}G(\boldsymbol{\xi})\bigg[{\pi}_d({\boldsymbol{s}}_{k+1}|\boldsymbol{s}_k)\prod_{i \neq k}^K{\pi}_d({\boldsymbol{s}}_{i+1}|\boldsymbol{s}_i)\bigg],
\end{equation}
in which $\pi_d({\boldsymbol{s}}_{k+1}|\boldsymbol{s}_k)=\sum_{a_k}\pi(a_k|\boldsymbol{s}_k)P({\boldsymbol{s}}_{k+1}|\boldsymbol{s}_k,a_k)$, with $\pi(a_k|\boldsymbol{s}_k)$ capturing probability of selecting $a_k$ at $\boldsymbol{s}_k$ and $P(\boldsymbol{s}_{k+1}|\boldsymbol{s}_k,a_k)$ denotes that the DBS moves from $\boldsymbol{s}_k$ to $\boldsymbol{s}_{k+1}$ by taking action $a_k$. Under inaccurate property perception at step $k$, the expected utility becomes
\begin{equation}\label{eq:error state}
    J(\boldsymbol{s}+\boldsymbol{\Delta s}_k)\!=\!\sum_{\boldsymbol{\xi}\in\mathcal{E}}G(\boldsymbol{\xi})\bigg[{\pi}_d({\boldsymbol{s}}_{k+1}|\boldsymbol{s}_k\!+\!\boldsymbol{\Delta s}_k)\!\prod_{i\neq k}^K\!{\pi_d}({\boldsymbol{s}}_{i+1}|\boldsymbol{s}_i)\bigg],
\end{equation}
in which $\pi_d({\boldsymbol{s}}_{k+1}|\boldsymbol{s}_k+\boldsymbol{\Delta s}_k)=\sum_{a_k}\pi(a_k|\boldsymbol{s}_k+\boldsymbol{\Delta s}_k)P({\boldsymbol{s}}_{k+1}|\boldsymbol{s}_k,a_k)$.

Since DBS always chooses the maneuvering decision that maximizes its hit rate at current physical network property, the decisions under perception errors, i.e., $\pi(a_k|\boldsymbol{s}_k+\boldsymbol{\Delta s}_k)$, deviates from the optimal ones in \eqref{eq:true state}, i.e., $\pi(a_k|\boldsymbol{s}_k)$. In other words, when the perception of the physical network property is erroneous, the DBS will choose non-optimal decisions with high probability, while choosing the optimal one with low probability, resulting in a lower probability assigned to high-utility trajectories and thus a reduced expected utility. Therefore,
\begin{equation}
    J(\boldsymbol{s})\ge J(\boldsymbol{s}+\boldsymbol{\Delta s}_k).
\end{equation}
This completes our proof.

 \begin{figure}
  \centering
  %\vspace{-0.1cm}
  \includegraphics[width=0.4\textwidth]{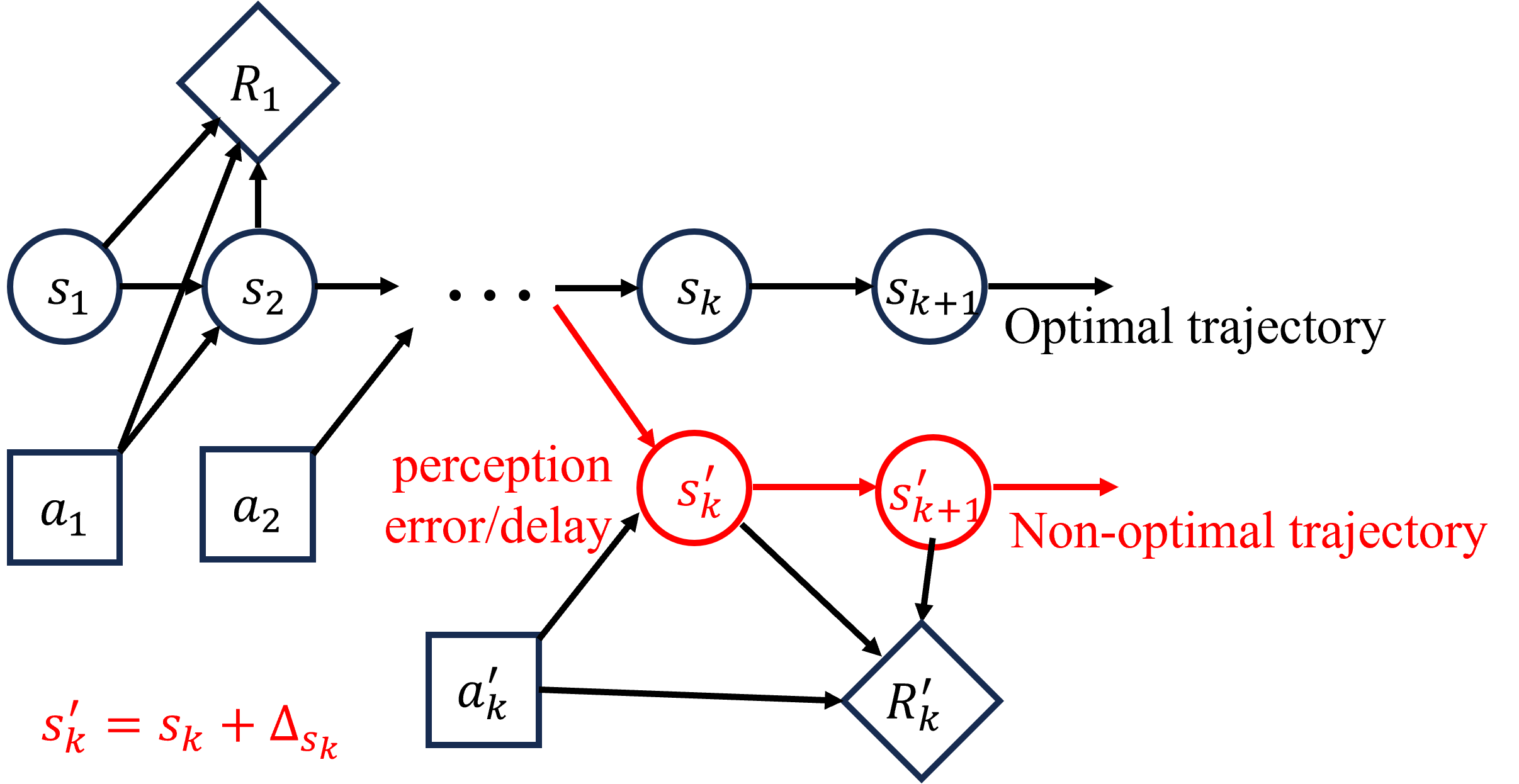}
  \caption{\footnotesize{Illustration of the state misperception.}}
  \label{Fig. proof1}
  \centering
  \vspace{-0.5cm}
\end{figure}

\section{The proof of Theorem 2}
Let $\boldsymbol{\xi}^*$ describe the optimal trajectory that is selected under accurate physical network property perception. From \eqref{eq:true state}, the probability of selecting $\boldsymbol{\xi}^*$ under the accurate physical network property perception is given by
\begin{equation}
    P(\boldsymbol{\xi}^*)\!=\!\prod_{k=1}^K \!\pi(a_k^*|\boldsymbol{s}_k)\! P({\boldsymbol{s}}_{k+1}|\boldsymbol{s}_k, a_k^*).
\end{equation}
Here, $a_k^*$ denotes the optimal action of DBS at step $k$. The probability of selecting $\boldsymbol{\xi}^*$ with a perception error $\boldsymbol{\Delta s}_k$ at step $k$ is given by 
\begin{align}
    &\hat{P}(\boldsymbol{\xi}^*) =\notag \\
    &\pi(a_k^*|\boldsymbol{s}_k\!+\!\boldsymbol{\Delta s}_k)P({\boldsymbol{s}}_{k+1}|\boldsymbol{s}_k,a_k^*)\!\prod_{i \neq k}^K \!\pi(a_i^*|\boldsymbol{s}_i)\! P({\boldsymbol{s}}_{i+1}|\boldsymbol{s}_i, a_i^*).
\end{align}
The logarithmic difference between $P(\boldsymbol{\xi}^*)$ and $\hat P(\boldsymbol{\xi}^*)$ is given by 
\begin{align}
    \log \frac{\hat P(\boldsymbol{\xi}^*)}{P(\boldsymbol{\xi}^*)}\!&=\!\log \frac{\pi(a_k^*\!|\boldsymbol{s}_k\!+\!\boldsymbol{\Delta s}_k)}{\pi(a_k^*\!|\boldsymbol{s}_k)}\!+\!\log \frac{P({\boldsymbol{s}}_{k+1}\!|\boldsymbol{s}_k,a_k^*)}{P({\boldsymbol{s}}_{k+1}\!|\boldsymbol{s}_k,a_k^*)}\notag\\
    &\ge -L_{\pi}\epsilon,
\end{align}
in which $L_{\pi}$ is the local Lipschitz constant for the policy, satisfying $|\log \pi(a_k^*|\boldsymbol{s}_k)-\log \pi(a_k^*|\boldsymbol{s}_k+\boldsymbol{\Delta s}_k)|\le L_{\pi}\epsilon$ and $||\boldsymbol{\Delta s}_k||\le\epsilon$. Therefore, we obtain
\begin{equation}
    \frac{\hat P(\boldsymbol{\xi}^*)}{P(\boldsymbol{\xi}^*)}\ge e^{-L_{\pi}\epsilon}.
\end{equation}
That is, the marginal probability of the optimal trajectory changes on the exponential scale of perception error $\Delta \boldsymbol{s}$, so does the expected utility. This completes our proof.

\end{document}